\renewcommand{\P}{\mathbb{P}}
\newcommand{\cP}{\mathcal{P}}
\newcommand{\matO}{\mathbf{0}}
\renewcommand{\S}{\mathcal{S}}
\newcommand{\K}{\mathcal{K}}
\newcommand{\bcvar}{\textrm{-} \mathrm{CVaR}}
\newcommand{\sol}{\textrm{sol}}
\newcommand{\lqr}{\textrm{lqr}}
\newcommand{\wcvar}{\textrm{W-CVaR}}
\newcommand{\bx}{\bar{x}}
\newcommand{\bu}{\bar{u}}
\newcommand{\bmu}{\bar{\mu}}
\newcommand{\bsigma}{\bar{\sigma}}
\newcommand{\bSigma}{\bar{\varSigma}}
\newcommand{\bOmega}{\bar{\varOmega}}
\newcommand{\bGamma}{\bar{\varGamma}}
\newcommand{\bS}{\bar{S}}
\newcommand{\wmu}{\mu_w}
\newcommand{\wSigma}{\varSigma_w}
\newcommand{\wsigma}{\sigma_w}
\newcommand{\tM}{\tilde{M}}
\newcommand{\tm}{\tilde{m}}
\newcommand{\tp}{\tilde{p}}
\newcommand{\tOmega}{\tilde{\varOmega}}
\newcommand{\tomega}{\tilde{\omega}}
\newcommand{\tq}{\tilde{q}}
\newcommand{\tmu}{\tilde{\mu}}
\newcommand{\bxi}{\bm{\xi}}
\newcommand{\btheta}{\bm{\theta}}
\newcommand{\JI}{J^{\mathrm{I}}}
\newcommand{\JII}{J^{\mathrm{II}}}
\newcommand{\boldtitle}[1]{\medskip\noindent\textbf{#1}}
\renewcommand{\succeq}{\succcurlyeq}
\newcommand{\svdots}{\raisebox{0pt}{\scalebox{.58}{$\vdots$}}}
\newcommand{\sddots}{\raisebox{0pt}{\scalebox{.58}{$\ddots$}}}
\newcommand{\shahriar}[1]{\textcolor{violet}{Shahriar:#1}}
\begin{document}

  \allowdisplaybreaks[4]
  \setlength{\abovedisplayskip}{1pt}
  \setlength{\abovedisplayshortskip}{1pt}
  \setlength{\belowdisplayskip}{1pt}
  \setlength{\belowdisplayshortskip}{1pt}
  \setlength{\jot}{1pt}  
  \setlength{\floatsep}{1ex}
  \setlength{\textfloatsep}{1ex}



  \title{\vspace{0.5cm}Risk-sensitive Affine Control Synthesis for\\Stationary LTI Systems
}

\author{Yang Hu, Shahriar Talebi, Na Li
\thanks{The authors are with the School of Engineering and Applied Sciences, Harvard University, USA. This work is supported by NSF AI Institute 2112085. Emails: yanghu@g.harvard.edu, talebi@g.harvard.edu, nali@seas.harvard.edu.}
}

\maketitle

\begin{abstract}
  To address deviations from expected performance in stochastic systems, we propose a risk-sensitive control synthesis method to minimize certain risk measures over the limiting stationary distribution. Specifically, we extend Worst-case Conditional Value-at-Risk (W-CVaR) optimization for Linear Time-invariant (LTI) systems to handle nonzero-mean noise and affine controllers, using only the first and second moments of noise, which enhances robustness against model uncertainty. Highlighting the strong coupling between the linear and bias terms of the controller, we reformulate the synthesis problem as a Bilinear Matrix Inequality (BMI), and propose an alternating optimization algorithm with guaranteed convergence. Finally, we demonstrate the numerical performance of our approach in two representative settings, which shows that the proposed algorithm successfully synthesizes risk-sensitive controllers that outperform the naive LQR baseline.
\end{abstract}

\begin{IEEEkeywords}
  Stationary LTI systems, risk-sensitive control, Worst-case CVaR, affine feedback controller
\end{IEEEkeywords}
  \section{Introduction}\label{sec:1-introduction}

In stochastic systems, optimizing solely for expected performance metrics can yield unsatisfactory outcomes. Instead, it is often necessary to consider higher-order characteristics of the distribution to address deviations from average performance that arises from the inherent stochasticity. Risk-sensitive modeling, originally developed in finance, offers systematic techniques to account for these deviations under uncertainty, with broad applicability across a wide range of engineering fields, including risk-sensitive control \cite{duncan2013linear, moon2020generalized}, risk-constrained control \cite{tsiamis2020risk, zhao2021infinite}, risk-sensitive model predictive control \cite{singh2018framework, sopasakis2019risk}, and risk-sensitive reinforcement learning \cite{wang2023near, zhang2023regularized}.

To quantify the risks observed in stochastic systems, various measures are introduced to capture distributional properties of random variables \cite{delbaen2002coherent, follmer2002convex}, among which the most widely-used are value-at-risk (VaR) \cite{duffie1997overview}, conditional value-at-risk (CVaR) \cite{rockafellar2000optimization} and entropic risk measure \cite{follmer2011entropic}. However, computing these risk measures often requires full knowledge of the governing distributions, which is not necessarily available in practice due to model uncertainty and limited data samples. \textit{Worst-Case CVaR (W-CVaR)} \cite{zymler2013distributionally}), as its name suggests, measures the CVaR risk in the worst case within the family of distributions with the same prescribed first and second moments. As an upper bound of the actual risk measured in CVaR, it is not only capable of inducing risk-sensitive performance in face of limited information, but is also robust against model uncertainty by only requiring the first two moments of the noise---which is standard for stochastic systems \cite{van2015distributionally, kishida2023risk}.

In stochastic Linear Time-Invariant (LTI) systems, we often optimize cumulative step-wise performance measures which, if designed carefully, also implies favorable properties regarding long-term closed-loop behavior (e.g., asymptotic stability, bounded second moments, etc.). The (stationary) limiting distribution of the process, however, provides a more comprehensive view by capturing this long-term behavior under sustained stochastic disturbances. This viewpoint is not only of theoretical interest \cite{collins1987PhDthesis, chen2015optimal, liu2022optimal}, but also crucial in scenarios like molecular engineering \cite{toyabe2010nonequilibrium, braiman2003control} and spacecraft control \cite{zhang2023stochastic, benedikter2022covariance}, where short-term fluctuations are often manageable separately, but the accumulated long-term imbalances will eventually impair the output quality. 

Most recently, the worst-case CVaR measure on the stationary distribution of LTI systems (so-called ``stationary LTI systems'') was considered in \cite{kishida2023risk}, but only under the assumptions of \textit{zero-mean} noise and \textit{linear} stationary controllers. In this special case, the W-CVaR-optimal controller happens to coincide with the LQR optimal controller. However, in practice, noise may have a nonzero-mean representing a constant drift (see, for example, the HVAC dynamics in \Cref{sec:apdx-simulation_setting}), which necessitates the extension of the analysis to account for \textit{nonzero-mean} noise and \textit{affine} controllers. Then naturally, one might wonder whether, similar to the LQR controller synthesis, the optimization over the linear term can be decoupled from the bias term in the risk-sensitive affine control synthesis problem. Unfortunately, this is not the case---there exists a strong coupling between the linear and bias terms of the controller, due to the interaction between the mean and covariance of the stationary distribution in the risk-sensitive synthesis. As shown in the example provided in \Cref{sec:example_1d}, the W-CVaR-optimal (affine) controller becomes increasingly advantageous over the so-called \textit{naive affine LQR} controller that, by ignoring the coupling, assigns the bias and linear term separately. This phenomenon also illustrates the increasing coupling between the linear and bias terms as the mean and covariance of the noise grow larger.


\boldtitle{Contributions.} In this paper, we consider stationary LTI systems with \textit{nonzero-mean} noise, and formulate a risk-sensitive \textit{affine} control synthesis problem in \Cref{sec:2-settings} that optimizes a W-CVaR measure on the limiting stationary distribution. We also provide an explanatory example to illustrate the necessity of jointly optimizing the linear and bias terms. To find the risk-sensitive affine controller, we reformulate the problem as a Bilinear Matrix Inequality (BMI) in \Cref{sec:3-algorithm}, which not only captures the strong coupling between the linear and bias terms, but also allows for the design of an alternating optimization algorithm to solve it. The algorithm alternates between two convex sub-problems that are easier to solve numerically, and is guaranteed to converge to a suboptimal solution that can be shown by leveraging the monotonicity and boundedness of the generated sequence (see \Cref{sec:convergence} for details). Finally, we demonstrate the effectiveness of our algorithm in \Cref{sec:4-simulations} by numerically synthesizing risk-sensitive controllers in two representative settings, Inverted Pendulum \cite{hespanha2018linear} and Multi-zone Heating, Ventilation, and Air Conditioning (HVAC) Thermal Control \cite{li2021distributed}. For better presentation, some technical proofs are deferred to the appendix.

\boldtitle{Notations.} Let $A \succ B$ ($A \succeq B$) denote the fact that $A-B$ is positive (semi)definite. Let $\S^n$ denote the set of symmetric matrices, where subscripts may be used to indicate positive-(semi-)definiteness. Any symmetric matrix can be represented by its upper triangular part only, with the lower triangular part filled by $*$. Let $\rho(A)$ denote the spectral radius of a matrix $A$. Let $\norm{\cdot}$ denote the Euclidean 2-norm of vectors and matrices. Let $\angl{A, B} := \tr(A^{\top} B)$ denote the standard Euclidean inner product of matrices. Let $A^{\dagger}$ denote the Moore-Penrose pseudoinverse of a matrix $A$. Let $(x)^+ := \max\brac{x, 0}$. Let $\Delta(S)$ denote the set of all probability measures on a set $S$. Let $\texttt{dlqr}(A,B,Q,R) := -(R + B^{\top} P B)^{-1} B^{\top} P A$ denote the optimal controller for the LQR instance $(A,B,Q,R)$, where $P$ is the solution to the corresponding discrete-time algebraic Riccati equation (DARE).

  \section{Preliminaries and Problem Formulation}\label{sec:2-settings}

We consider an infinite-horizon discrete-time LTI system
\begin{equation*}
  x_{t+1} = A x_t + B u_t + w_t,~ \forall t \in \N,
\end{equation*}
where $x_t \in \R^n$ is the state vector, and $u_t \in \R^m$ is the input vector; $w_t \in \R^d$ is an i.i.d. noise independent of $x_t$, $u_t$ and time step $t$, with mean $\wmu \in \R^n$ and covariance matrix $\wSigma \in \S^{n}_{\succeq 0}$; Matrices $A \in \R^{n \times n}$ and $B \in \R^{n \times m}$ denote system parameters. We point out that the noise covariance $\wSigma$ is not necessarily required to be positive definite, and make the following minimal assumption regarding the dynamics.

\begin{assumption}
  $(A,B)$ is stabilizable.
\end{assumption}

To properly handle the nonzero-mean setting, instead of the standard choice of linear feedback controllers, we extend the scope of candidate controllers to stabilizing \textit{affine} feedback controllers $u = Kx + \ell$, where the \textit{linear} part $K \in \K : =\set{K \in \R^{m \times n} \mid \rho(A + BK) < 1}$ is assumed to stabilize the closed-loop system, and $\ell \in \R^m$ is called the \textit{bias} term. We further discuss the reason for considering the extended class of affine controllers in \Cref{sec:3-1-general_mean_challenge}.

As our choice of risk measure is related to the class of distributions with prescribed mean and covariance, we characterize the closed-loop limiting first and second moments under affine feedback controllers in the following lemma.

\begin{lemma}\label{thm:limiting_distribution}
  With a stabilizing affine feedback controller $u = Kx + \ell$ (i.e., $K \in \K$), the closed-loop system $x_{t+1} = (A+BK) x_t + B \ell + w_t$
  admits a unique set of first two limiting moments $(\bmu, \bSigma)$, which are specified by
  \begin{align*}
    \bmu &= \sum_{\tau = 0}^{\infty} (A+BK)^{\tau} (B\ell + \wmu), \\
    \bSigma &= \sum_{\tau=0}^{\infty} (A+BK)^{\tau} \wSigma ((A+BK)^{\tau})^{\top}.
  \end{align*}
  Further, these limiting moments are invariant under the closed-loop dynamics, i.e.,
  \begin{subequations}\label{eq:stationary_distribution}
  \begin{align}
    (A+BK) \bmu + B\ell + \wmu &= \bmu,\\
    (A+BK) \bSigma (A+BK)^{\top} + \wSigma &= \bSigma. \label{eq:stationary_distribution:Sigma}
  \end{align} 
  \end{subequations}
\end{lemma}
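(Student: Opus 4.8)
The plan is to track the evolution of the mean $\mu_t := \mathbb{E}[x_t]$ and covariance $\Sigma_t := \mathrm{Cov}(x_t)$ of the closed-loop process, show that both recursions are contractive because $\rho(A+BK) < 1$, identify their limits with the stated series, and read off invariance directly. For brevity write $M := A+BK$, so the closed loop reads $x_{t+1} = M x_t + B\ell + \wmu + (w_t - \wmu)$, and recall that $\rho(M) < 1$ gives, via Gelfand's formula (or Schur stability), a geometric bound $\norm{M^\tau} \le C \gamma^\tau$ for some $C > 0$ and $\gamma \in (\rho(M),1)$; this bound is what makes every matrix series below absolutely convergent and every initial-condition term vanish.

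\emph{Mean.} Taking expectations and using $\mathbb{E}[w_t] = \wmu$ yields $\mu_{t+1} = M \mu_t + B\ell + \wmu$. Unrolling from an arbitrary initial law gives $\mu_t = M^t \mu_0 + \sum_{\tau=0}^{t-1} M^\tau (B\ell + \wmu)$; the geometric bound sends $M^t \mu_0 \to 0$ and makes the partial sums converge, so $\mu_t \to \bmu := \sum_{\tau=0}^{\infty} M^\tau (B\ell + \wmu) = (I-M)^{-1}(B\ell+\wmu)$, a limit not depending on $\mu_0$.

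\emph{Covariance.} Subtracting the mean recursion from the state recursion gives $x_{t+1} - \mu_{t+1} = M(x_t - \mu_t) + (w_t - \wmu)$, and since $w_t$ is independent of $x_t$ the cross term has zero expectation, so $\Sigma_{t+1} = M \Sigma_t M^\top + \wSigma$. Unrolling, $\Sigma_t = M^t \Sigma_0 (M^t)^\top + \sum_{\tau=0}^{t-1} M^\tau \wSigma (M^\tau)^\top$; the first term vanishes and each summand of the series is positive semidefinite with norm at most $C^2 \gamma^{2\tau}\norm{\wSigma}$, so $\Sigma_t \to \bSigma := \sum_{\tau=0}^{\infty} M^\tau \wSigma (M^\tau)^\top$, independent of $\Sigma_0$ (and $\bSigma \succeq 0$ even when $\wSigma$ is only PSD). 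The independence of both limits from the initial condition is precisely the claimed uniqueness of the limiting moments.

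\emph{Invariance.} Substituting the series into the right-hand sides, $M\bmu + B\ell + \wmu = \sum_{\tau=1}^{\infty} M^\tau(B\ell+\wmu) + (B\ell+\wmu) = \bmu$ and $M\bSigma M^\top + \wSigma = \sum_{\tau=1}^{\infty} M^\tau \wSigma (M^\tau)^\top + \wSigma = \bSigma$, which are exactly \eqref{eq:stationary_distribution}; equivalently, $\bmu$ and $\bSigma$ are the unique solutions of $(I-M)\mu = B\ell+\wmu$ and the discrete Lyapunov equation $\Sigma - M\Sigma M^\top = \wSigma$, both well-posed since $\rho(M) < 1$. The argument is routine; the only points needing care are invoking a spectral-radius bound (not merely $\rho(M)<1$) to justify convergence of the matrix series and the decay of the transient terms, and being explicit that ``limiting moments'' means the $t\to\infty$ limits of $(\mu_t,\Sigma_t)$, whose independence of $(\mu_0,\Sigma_0)$ yields uniqueness.
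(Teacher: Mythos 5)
Your proof is correct and follows essentially the same route as the paper's: unroll the closed-loop recursion, let the transient terms involving $(\mu_0,\varSigma_0)$ vanish under Schur stability, identify the limits with the stated series, and verify invariance by shifting the summation index. If anything, you are slightly more careful than the paper on one point---you invoke a geometric bound $\norm{M^{\tau}}\le C\gamma^{\tau}$ from $\rho(M)<1$, whereas the paper's bound $\norm{A_K^t\mu_0}\le\norm{A_K}^t\norm{\mu_0}\to 0$ implicitly assumes $\norm{A_K}<1$, which need not hold when only the spectral radius is below one.
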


\subsection{Risk Measure: Worst-case CVaR}\label{sec:settings-WCVaR}
Next, we define worst-case CVaR (W-CVaR) as the risk measure that is quantified by the largest CVaR value among a family of distributions.

\begin{definition}[CVaR and W-CVaR \cite{rockafellar2000optimization,zymler2013distributionally}]
  Given a risk level $\beta \in (0,1)$, for any distribution $\P \in \Delta(\varXi)$ and random variable $L: \varXi \to \R$ defined on a sample space $\varXi$, define the CVaR of $L(\xi)$ with respect to $\xi \sim \P$ at risk level $\beta$ as%
  \footnote{Here we follow the definition in \cite{rockafellar2000optimization}. Some literature may refer to $\varepsilon := 1-\beta$ as the risk level.}%
  \begin{equation*}
    \P\bcvar_{\beta}[L(\xi)] := \min_{\alpha \in \R} \brac*{\alpha + \tfrac{1}{1-\beta} \E[\xi \sim \P]{(L(\xi)-\alpha)^+}}.
  \end{equation*}
  Further, for a family of distributions $\cP \subseteq \Delta(\varXi)$ over $\varXi$, define the W-CVaR of $L(\xi)$ with respect to $\cP$ at risk level $\beta$ as
  \begin{equation*}
    \cP\bcvar_{\beta}[L(\xi)] := \sup_{\P \in \cP} \P\bcvar_{\beta}[L(\xi)].
  \end{equation*}
\end{definition}

Since the stochastic noise is only modelled up to its first and second moments, the state distributions are also only specifiable to the first two moments. Therefore, in this paper, we focus on the family $\cP_{\mu,\varSigma}$ of distributions with prescribed mean $\mu$ and variance $\varSigma$. It is known that the W-CVaR with respect to $\cP_{\mu,\varSigma}$ can be equivalently characterized by a constrained convex optimization problem (see \Cref{thm:quadratic_worst_CVaR_characterization} in  \Cref{sec:apdx-WCVaR_lemma}).

\subsection{The Risk-sensitive Affine Control Synthesis Problem}

Let $\bx$ obey the limiting state distribution, and $\bu = K \bx + \ell$ be the corresponding control input. Consider the quadratic cost on the limiting state $\bx$ defined by
\begin{align*}
  c_{K,\ell}(\bx) &= \bx^{\top} Q \bx + \bu^{\top} R \bu \\
  &= \bx^{\top} (Q + K^{\top} R K) \bx + 2 (K^{\top} R\ell)^{\top} \bx + \ell^{\top} R \ell,
\end{align*}
where $Q, R \in \S^{n}_{\succ 0}$. Our objective is to synthesize a stabilizing affine controller $u = Kx + \ell$ that minimizes the $\cP_{\bmu, \bSigma}$-W-CVaR measure of the random variable $c(\bx)$, i.e.,
\begin{subequations}\label{eq:problem-original}
\begin{align}
  \min_{\bmu, \bSigma, K, \ell}\quad& \cP_{\bmu, \bSigma}\bcvar[c_{K,\ell}(\bx)] \label{eq:problem-original:1}\\
  \mathrm{s.t.}\quad
    & K \in \K,~ \ell \in \R^m;~ \bmu \in \R^n,~ \bSigma \in \mathcal{S}^{n}_{\succeq 0}, \\
    & (A+BK) \bSigma (A+BK)^{\top} + \wSigma = \bSigma, \\
    & (A + BK) \bmu + B\ell + \wmu = \bmu .
\end{align}
\end{subequations}
By \Cref{thm:quadratic_worst_CVaR_characterization}, we can equivalently rewrite the objective function as a constrained optimization:
\begin{align} \label{eq:WCVaR_equivalence}
  &\cP_{\bmu,\bSigma}\bcvar_{\beta}[c_{K,\ell}(\bx)] ={} \\
  &\min_{ \alpha \in \R, M \in \mathcal{S}^{n+1}_{\succeq 0} } \brac*{\alpha + \tfrac{1}{1-\beta} \angl{\bOmega, M} \;\middle|\; M \succeq \begin{bsmallmatrix}
      Q + K^{\top} R K & K^{\top} R\ell \\
      * & \ell^{\top} R \ell -\alpha
    \end{bsmallmatrix}}, \nonumber
\end{align}
where
\(
  \bOmega := \begin{bsmallmatrix}
    \bSigma + \bmu \bmu^{\top} & \bmu \\
    * & 1
  \end{bsmallmatrix}
\)
denotes the second-order moment matrix of the limiting distribution. Therefore, \eqref{eq:problem-original} is equivalent to the following optimization problem:
\begin{subequations}\label{eq:problem-main}
\begin{align}
  \min_{\alpha, M, \bmu, \bSigma, K, \ell}\quad& \alpha + \tfrac{1}{1-\beta} \angl{\bOmega, M} \label{eq:problem-main:1}\\
  \mathrm{s.t.}\quad
    & M \in \mathcal{S}^{n+1}_{\succeq 0},~ K \in \K,~ \bSigma \in \mathcal{S}^{n}_{\succeq 0}, \label{eq:problem-main:2}\\
    & (A+BK) \bSigma (A+BK)^{\top} + \wSigma = \bSigma, \label{eq:problem-main:3}\\
    & \bmu = (A + BK) \bmu + B\ell + \wmu, \label{eq:problem-main:4}\\
    & M \succeq \begin{bmatrix}
      Q + K^{\top} R K & K^{\top} R\ell \\
      * & \ell^{\top} R \ell -\alpha
    \end{bmatrix}, \label{eq:problem-main:5}\\
    & \bOmega = \begin{bmatrix}
      \bSigma + \bmu \bmu^{\top} & \bmu \\
      * & 1
    \end{bmatrix}. \label{eq:problem-main:6}
\end{align}
\end{subequations}

\subsection{An Example: Naive Affine LQR Controller Is Inferior} \label{sec:example_1d}
Before we propose our synthesis method, one may naturally construct a naive controller as an approximate solution to \eqref{eq:problem-main} based on the following idea: decouple the bias $\ell$ from $K$ by first canceling out $\wmu$ with $B \ell$, i.e. take $\ell_{\lqr} = -B^\dagger \wmu$, to reduce to the zero-mean setting, and then apply the known result that the optimal controller in this case coincides with the LQR controller $K_{\lqr} = \texttt{dlqr}(A,B,Q,R)$ \cite{kishida2023risk}. We refer to $u = K_{\lqr} x + \ell_{\lqr}$ as the \textit{naive affine LQR controller}.

Unfortunately, we can provide the following simple example to illustrate why the performance of the naive affine LQR controller is inferior; in other words, the coupling between the bias and linear terms of the controller are too strong to be ignored---as long as $\wmu \neq \matO$.

Consider a 1-dimensional system with dynamics $x_{t+1} = x_t + u_t + w_t$ ($x_t,u_t,w_t \in \R$).%
\footnote{Since everything is a scalar in this case, we will use the corresponding lower-case letters $k$, $\bsigma$ and $m_1$ to denote $K$, $\bSigma$ and $M_1$ in this example.} %
Set $\beta = \frac{1}{2}$ and $Q = R = 1$. In this case, the limiting moments are simply $\bmu = \frac{\ell + \mu_w}{-k}$, $\bsigma = \frac{\sigma_w}{-k(k+2)}$, so \eqref{eq:problem-main} can be simplified to:
\begin{subequations}\label{eq:problem-example_1d}
\begin{align}
  \min_{\alpha, m_1, p, m_0, k, \ell}\quad& \alpha + 2 \prn[\big]{ m_1 (\bsigma + \bmu^2) + 2p\bmu + m_0 } \label{eq:problem-example_1d:1}\\
  \mathrm{s.t.}\quad
    & -2 < k < 0, \label{eq:problem-example_1d:2}\\
    & \begin{bmatrix}
      m_1 & p \\
      p & m_0
    \end{bmatrix} \succeq \begin{bmatrix}
      k^2 + 1 & k\ell \\
      k\ell & \ell^2 -\alpha
    \end{bmatrix}. \label{eq:problem-example_1d:3}
\end{align}
\end{subequations}
We point out that, despite its low dimensionality and seemingly simple form, \eqref{eq:problem-example_1d} is actually hard to solve using a numerical solver---the product term $2p\bmu$ in the objective \eqref{eq:problem-example_1d:1}, along with the off-diagonal terms $p$ and $k\ell$ in \eqref{eq:problem-example_1d:3}, intensify the coupling in the problem, making it non-convex and thus computationally hard. For the purpose of exposition, we use grid search over $(k, \ell)$ to approximate the optimal solution $(k^{\star}, \ell^{\star})$ of \eqref{eq:problem-example_1d} under different noise settings $(\wmu, \wsigma) = (\mu, \mu)$. The results can be found in \Cref{fig:example_1d_controller} below, where we compare the W-CVaR-optimal affine controller $u = k^{\star} x + \ell^{\star}$ against the naive affine LQR controller $u = k_{\lqr} x + \ell_{\lqr}$ specified above.

\begin{figure}[t]
  \centering
  \begin{subfigure}[b]{0.49\linewidth}
    \centering
    \includegraphics[width=0.99\linewidth]{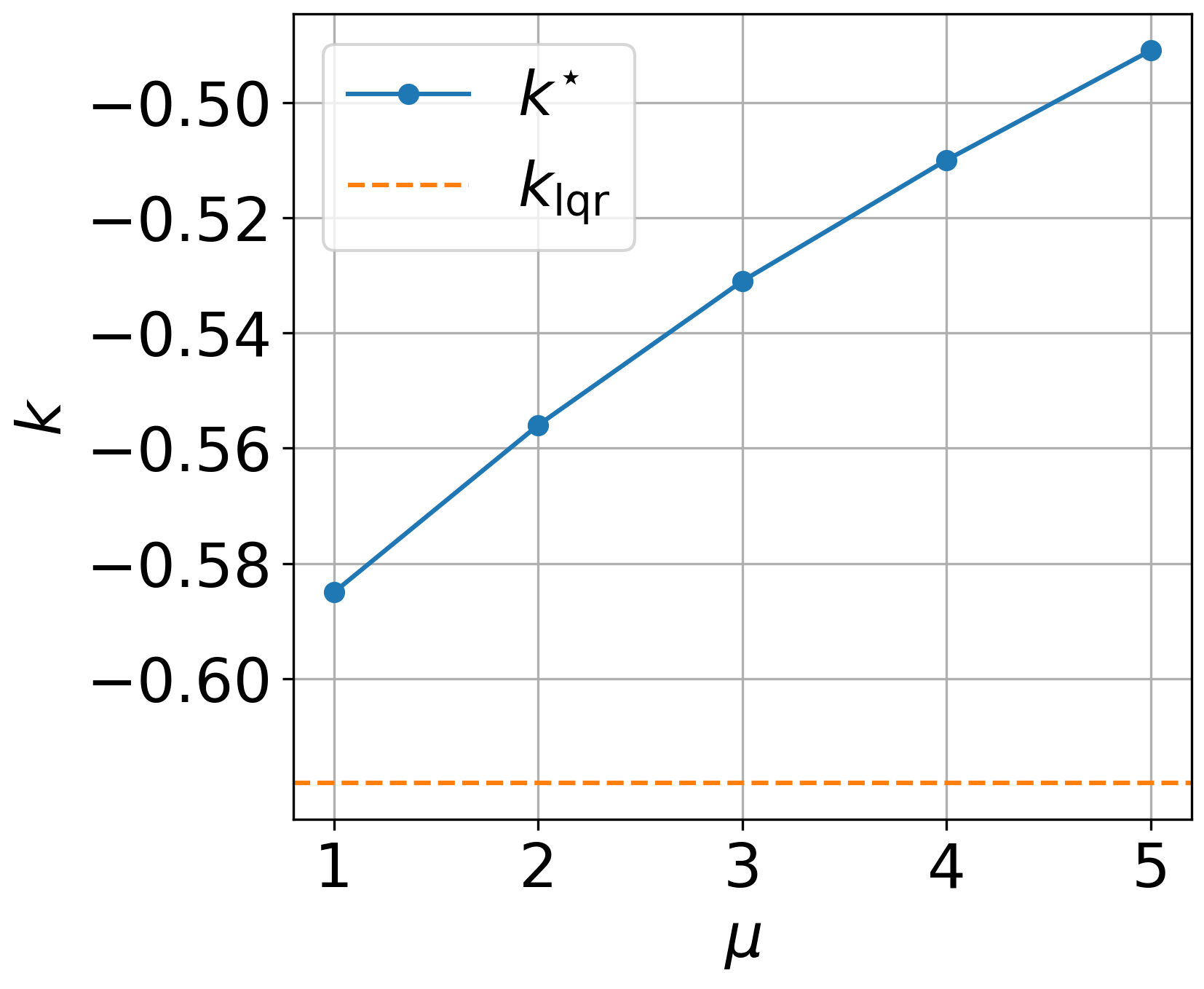}
    \caption{control gain $k$} \label{fig:example_1d_controller:k}
  \end{subfigure}
  \begin{subfigure}[b]{0.49\linewidth}
    \centering
    \includegraphics[width=0.99\linewidth]{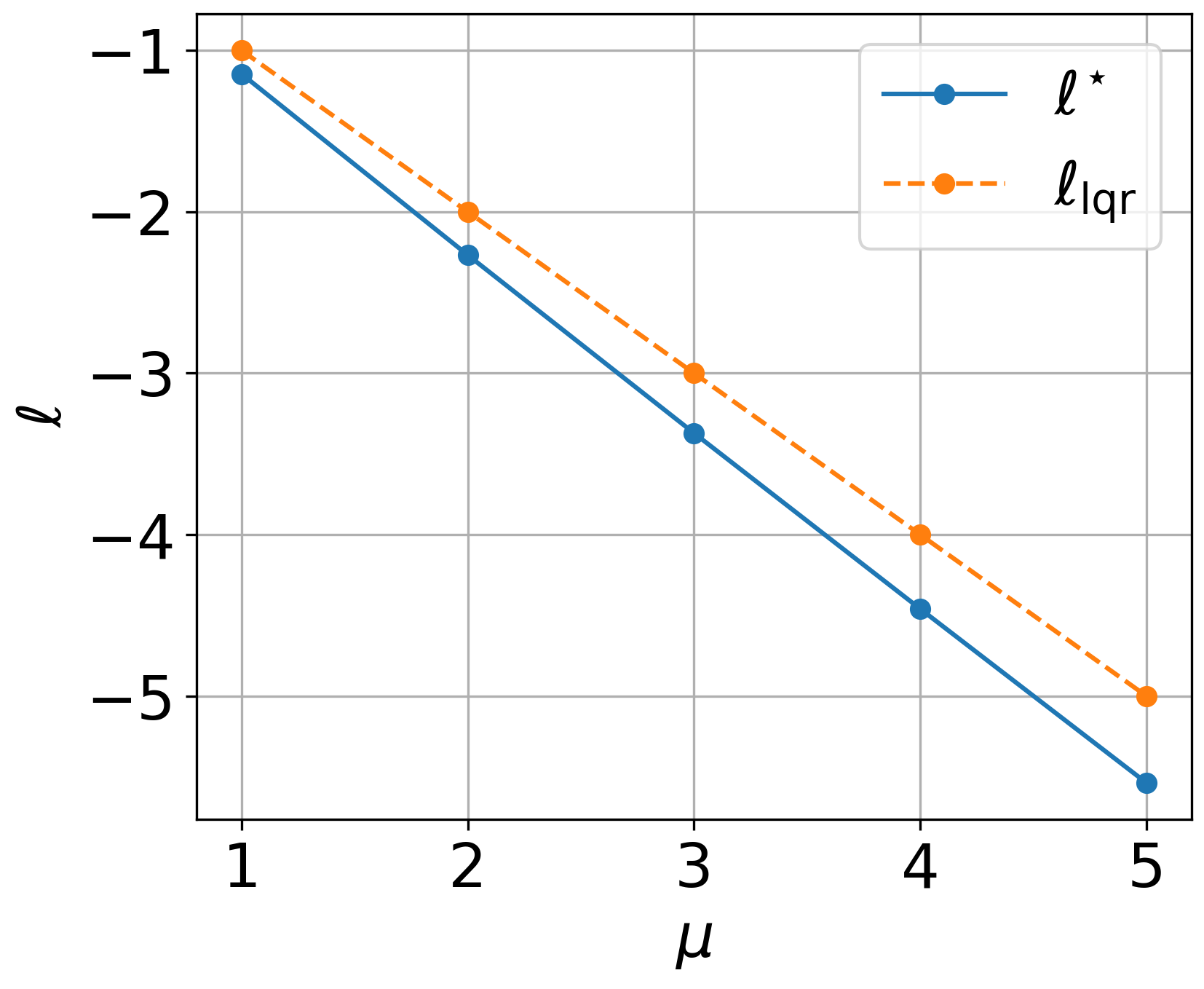}
    \caption{bias $\ell$} \label{fig:example_1d_controller:ell}
  \end{subfigure}

  \begin{subfigure}[b]{0.49\linewidth}
    \centering
    \includegraphics[width=0.99\linewidth]{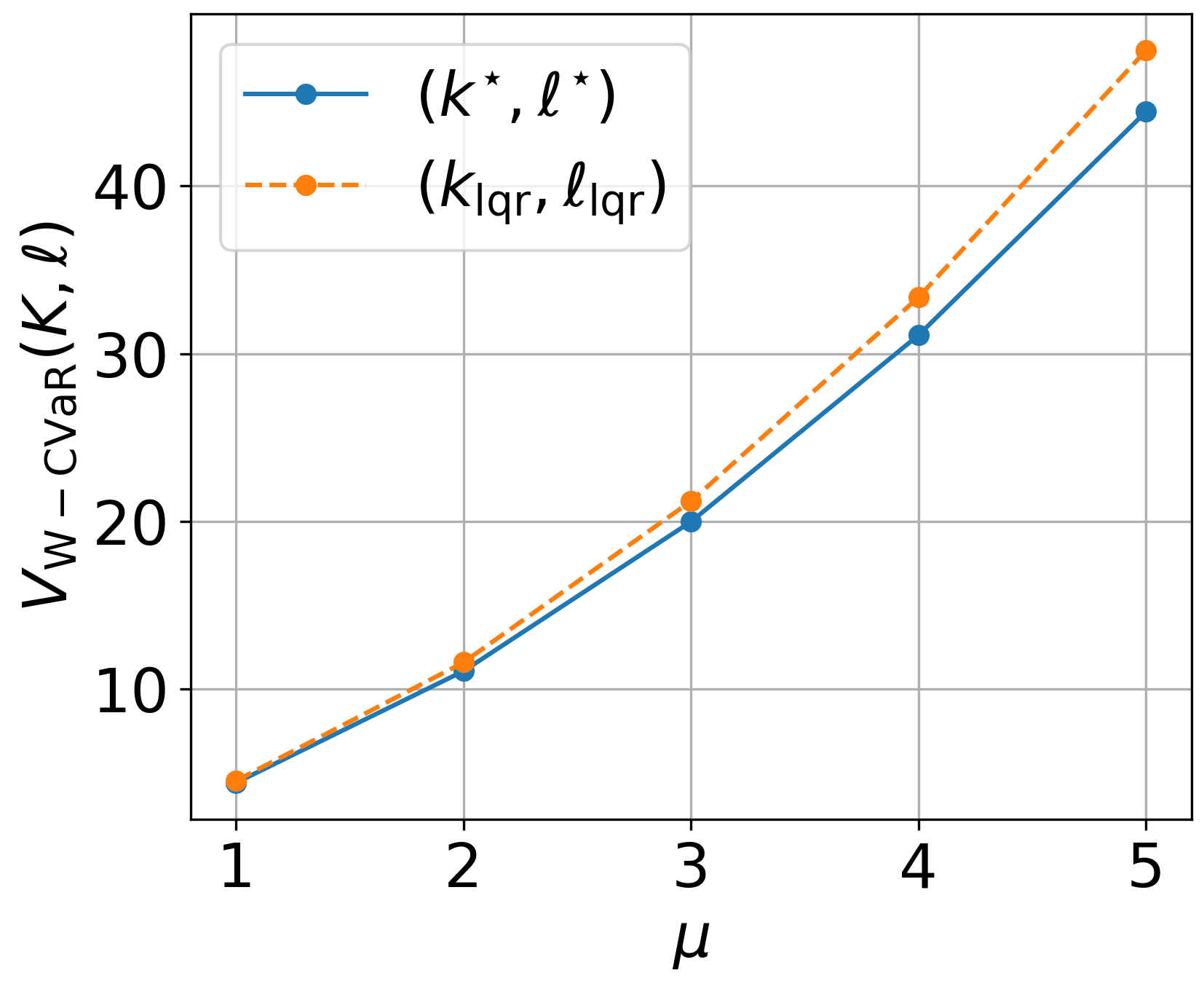}
    \caption{W-CVaR values} \label{fig:example_1d_controller:value}
  \end{subfigure}
  \begin{subfigure}[b]{0.49\linewidth}
    \centering
    \includegraphics[width=0.99\linewidth]{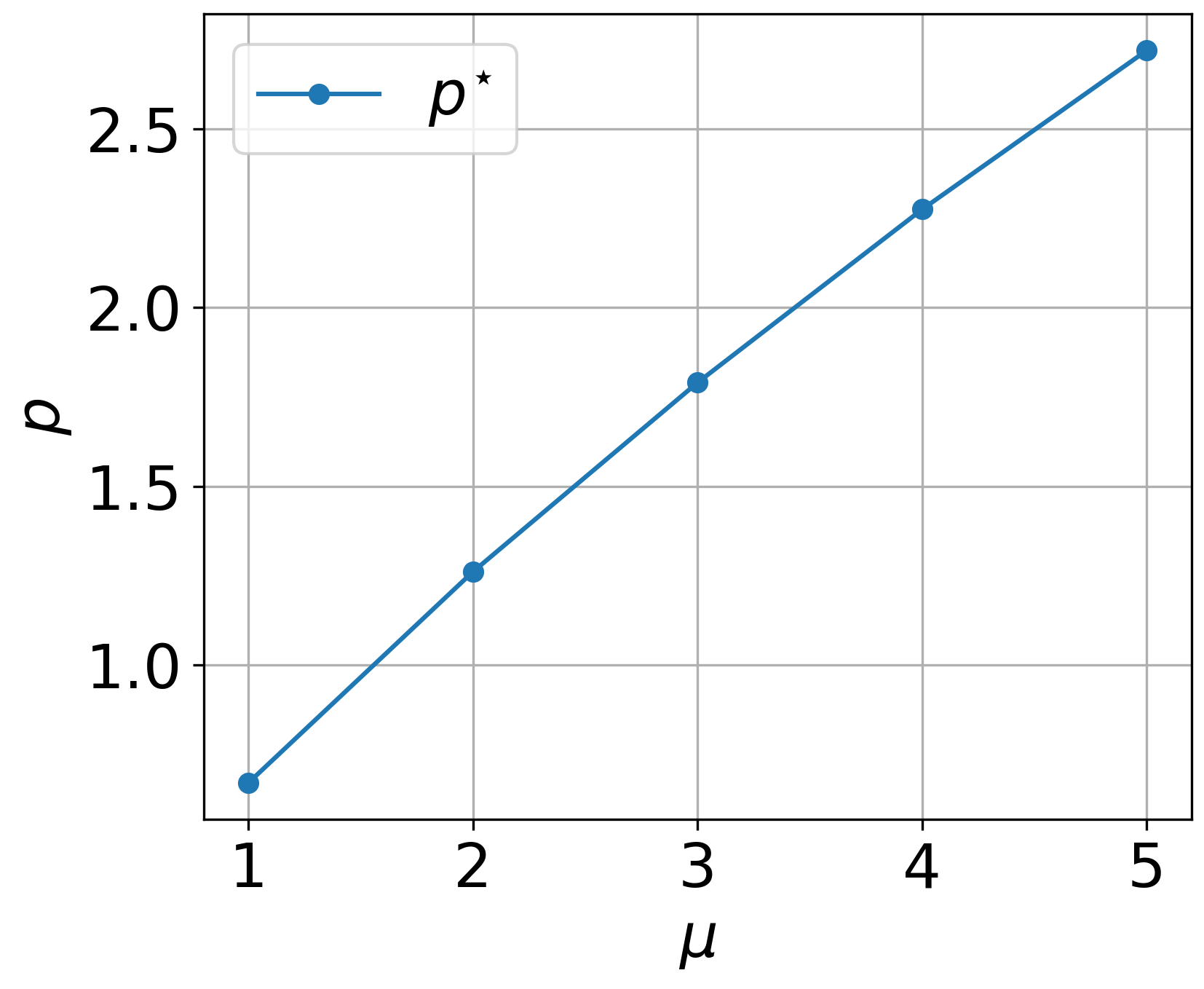}
    \caption{coupling term $p^*$} \label{fig:example_1d_controller:p}
  \end{subfigure}
  \caption{Comparison of $(K^{\star}, \ell^{\star})$ and $(K_{\lqr}, \ell_{\lqr})$.} \label{fig:example_1d_controller}
\end{figure}

It can be observed that the W-CVaR-optimal affine controller becomes increasingly more distinct from the naive affine LQR controller as the mean and covariance of the noise increase (see \Cref{fig:example_1d_controller:k,fig:example_1d_controller:ell}). This justifies that, unlike the LQR control synthesis, the nonzero-mean setting cannot be reduced to the zero-mean setting for risk-sensitive control synthesis. The performance gap in terms of W-CVaR values also increases with $\mu$, indicating a growing advantage of the W-CVaR-optimal affine controller over the naive affine LQR controller as shown in \Cref{fig:example_1d_controller:value}. \Cref{fig:example_1d_controller:p} suggests a potential explanation of these phenomena that echoes the analysis above---as the off-diagonal entry $p$ (in matrix $M$) grows, the coupling between decision variables is strengthened, resulting in a growing discrepancy from the zero-mean setting. 

The above example clarifies that the setting with nonzero-mean noise is fundamentally distinct from the zero-mean case considered in \cite{kishida2023risk}, and also reflects the challenges inherent to the risk-sensitive problem \eqref{eq:problem-main} to be addressed.
  \section{Risk-sensitive Affine Control Synthesis}\label{sec:3-algorithm}
Herein, we first elaborate on the intrinsic challenges in the risk-sensitive problem formulated in (\ref{eq:problem-main}), and then propose our reformulation as a BMI problem. Through this reformulation, we eventually provide an algorithm that solves the problem by alternating between two convex sub-problems.

\subsection{Intrinsic Challenges of the nonzero-mean Problem}\label{sec:3-1-general_mean_challenge}
To compare against the zero-mean case, we write $M$ in block form compatible with $\bOmega$ as
\begin{equation}\label{eq:block_form_M}
  M = \begin{bmatrix}
    M_1 & p \\
    * & m_0
  \end{bmatrix},
\end{equation}
where $M_1 \in \mathcal{S}^{n}_{\succeq 0}$, $p \in \R^{n}$ and $m_0 \in \R$. Consider the special case where $\wmu = \matO$ and $\ell = \matO$ (hence $\bmu = \matO$) studied in \cite{kishida2023risk}, so that the second-order moment matrix $\bOmega$ is decoupled to contain only the diagonal blocks $\bSigma$ and $1$, namely
\begin{equation*}
  \bOmega = \begin{bmatrix}
    \bSigma & \matO \\
    * & 1
  \end{bmatrix}. 
\end{equation*}
In this way, the objective in \eqref{eq:problem-main} also decouples, in the sense that $M_1$ and $m_0$ are completely independent in both \eqref{eq:problem-main:1} and \eqref{eq:problem-main:5}, and $p$ also disappears after simplification. Leveraging this decoupling between $M_1$ and $m_0$, it can then be shown that the optimal solution coincides with the LQR controller  \cite{kishida2023risk}.

We emphasize, however, that the decoupling is only feasible when $\ell = 0$ and $\bmu = \matO$. More concretely, after partitioning $M$ into block form \eqref{eq:block_form_M} in the nonzero-mean setting, we will find that the objective of \eqref{eq:problem-main} becomes
\begin{equation*}
  \alpha + \tfrac{1}{1-\beta}\prn[\big]{\angl{M_1, \bSigma + \bmu \bmu^{\top}} + 2p^{\top} \bmu + m_0},
\end{equation*}
where the coupling between decision variables is significantly stronger since the right-hand side of \eqref{eq:problem-main:5} is no longer block diagonal, and that the objective contains an additional term $p^{\top} \bmu$ that does not admit an immediate lower bound. This strong coupling makes it impossible to optimize for $K$ and $\ell$ separately, justifying the necessity of solving \eqref{eq:problem-main}.

\subsection{Relaxation to a Bilinear Matrix Inequality (BMI) Problem}

As discussed above, a general-purpose nonlinear optimization solver cannot solve \eqref{eq:problem-main} easily due to the following challenges: i) presence of high-order equality constraints \eqref{eq:problem-main:3} and \eqref{eq:problem-main:4}, ii) a mixture of matrix equality and inequality constraints, iii) the substantial coupling between the decision variables in the objective and the constraints. These challenges will be settled via LMI relaxation, resulting in an alternating optimization approach which will be further elaborated.
First, we transform \eqref{eq:problem-main} into a BMI form summarized below.

\begin{theorem}\label{thm:relax_LMI}
  Problem \eqref{eq:problem-main} can be relaxed to the following BMI:
  \begin{subequations}\label{eq:problem-relaxed}
  \begin{align}
    \min_{\alpha, \tM, \tOmega, \tmu, \bSigma, Y, \ell, H}\quad& \alpha + \tfrac{1}{1-\beta} \angl{\tOmega, \tM}, \label{eq:problem-relaxed:1}\\
    \mathrm{s.t.}\quad
      & \tM \in \mathcal{S}^{n+1}_{\succeq 0},~ \bSigma \in \mathcal{S}^{n}_{\succeq 0}, \label{eq:problem-relaxed:2}\\
      & H = A\bSigma+BY \label{eq:problem-relaxed:3}\\
      & \begin{bmatrix}
          \bSigma - \wSigma & H \\
          * & \bSigma  
        \end{bmatrix} \succeq 0, \label{eq:problem-relaxed:4}\\
      & \bSigma\tmu = H  \tmu + B\ell + \wmu, \label{eq:problem-relaxed:5}\\
      & \begin{bmatrix}
          \tM_1 & \tp & Y^{\top} & \bSigma \\
          * & \tm_0+\alpha & \ell^{\top} & \matO \\
          * & * & R^{-1} & \matO \\
          * & * & * & Q^{-1}
        \end{bmatrix} \succeq 0, \label{eq:problem-relaxed:6}\\
      & \begin{bmatrix}
          \tOmega_1  & \tq & \tmu & I\\
          * & \tomega_0 & 1 & 0 \\
          * & * & 1 & 0\\
          * & * & * & \bSigma
        \end{bmatrix} \succeq 0. \label{eq:problem-relaxed:7}
  \end{align}
  \end{subequations}
  Here we write $\tM$ and $\tOmega$ in the following block forms:
  \begin{equation}
    \tM =: \begin{bmatrix}
      \tM_1 & \tp \\
      * & \tm_0
    \end{bmatrix},\quad
    \tOmega =: \begin{bmatrix}
      \tOmega_1 & \tq \\
      * & \tomega_0,
    \end{bmatrix}
  \end{equation}
  where $\tM_1, \tOmega_1 \in \S^n$, $\tp, \tq \in \R^n$, $\tm_0, \tomega_0 \in \R$.
\end{theorem}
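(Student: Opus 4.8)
The plan is to prove that \eqref{eq:problem-relaxed} is a relaxation of \eqref{eq:problem-main} by exhibiting, for every point feasible for \eqref{eq:problem-main}, a point feasible for \eqref{eq:problem-relaxed} with the same objective value; since both are minimizations, this shows the optimal value of \eqref{eq:problem-relaxed} is at most that of \eqref{eq:problem-main}. Take $(\alpha, M, \bmu, \bSigma, K, \ell)$ feasible for \eqref{eq:problem-main} and first assume $\bSigma \succ 0$ (the degenerate case is handled last). The construction is a congruence-type change of variables that eliminates $K$ and rescales $M,\bmu,\bOmega$ by $\bSigma$: with $D := \mathrm{diag}(\bSigma, 1)$, set
\[
Y := K\bSigma,\qquad H := A\bSigma + BY,\qquad \tmu := \bSigma^{-1}\bmu,\qquad \tM := D M D,\qquad \tOmega := D^{-1}\bOmega D^{-1}.
\]
In the block notation of the theorem this gives $\tM_1 = \bSigma M_1 \bSigma$, $\tp = \bSigma p$, $\tm_0 = m_0$, and, using \eqref{eq:problem-main:6}, $\tOmega_1 = \bSigma^{-1} + \tmu\tmu^{\top}$, $\tq = \tmu$, $\tomega_0 = 1$. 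By cyclicity of the trace, $\angl{\tOmega, \tM} = \tr(D^{-1}\bOmega D^{-1} D M D) = \tr(\bOmega M) = \angl{\bOmega, M}$, so the objective \eqref{eq:problem-relaxed:1} equals \eqref{eq:problem-main:1}. It then remains to check that all of \eqref{eq:problem-relaxed:2}--\eqref{eq:problem-relaxed:7} hold.

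The easy constraints come for free: $\tM \succeq 0$ follows from $M \succeq 0$ by congruence, $\bSigma \succeq 0$ is carried over, and \eqref{eq:problem-relaxed:3} is the definition of $H$. For \eqref{eq:problem-relaxed:4}, the Lyapunov equality \eqref{eq:problem-main:3} reads $\bSigma - \wSigma = (A+BK)\bSigma(A+BK)^{\top} = H\bSigma^{-1}H^{\top}$, so a Schur complement with respect to the positive-definite lower-right block $\bSigma$ yields \eqref{eq:problem-relaxed:4} — this is the step where an equality is genuinely relaxed to an inequality. For \eqref{eq:problem-relaxed:5}, substituting $\bmu = \bSigma\tmu$ into the fixed-point relation \eqref{eq:problem-main:4} and using $H = (A+BK)\bSigma$ gives $\bSigma\tmu = H\tmu + B\ell + \wmu$ directly.

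The two remaining LMIs \eqref{eq:problem-relaxed:6}--\eqref{eq:problem-relaxed:7} are the technical core, and I expect the bookkeeping here to be the main obstacle. For \eqref{eq:problem-relaxed:6}, applying the congruence $\mathrm{diag}(\bSigma,1)$ to \eqref{eq:problem-main:5} produces $\bigl[\begin{smallmatrix} \tM_1 - \bSigma(Q + K^{\top}RK)\bSigma & \tp - \bSigma K^{\top}R\ell \\ * & \tm_0 + \alpha - \ell^{\top}R\ell\end{smallmatrix}\bigr] \succeq 0$, and after substituting $Y = K\bSigma$ this is exactly the Schur complement of the augmented matrix in \eqref{eq:problem-relaxed:6} with respect to its positive-definite lower-right block $\mathrm{diag}(R^{-1}, Q^{-1})$; that is, the auxiliary rows and columns carrying $R^{-1}$ and $Q^{-1}$ are engineered precisely so that eliminating them reproduces the quadratic terms $Y^{\top}RY = \bSigma K^{\top}RK\bSigma$ and $\bSigma Q\bSigma$. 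Similarly, for \eqref{eq:problem-relaxed:7}, the Schur complement of its left-hand side with respect to the block $\mathrm{diag}(1, \bSigma)$ equals $\tOmega - \bigl[\begin{smallmatrix}\bSigma^{-1} + \tmu\tmu^{\top} & \tmu \\ * & 1\end{smallmatrix}\bigr]$, which is zero for our choice of $\tOmega$, so \eqref{eq:problem-relaxed:7} holds (in general it only forces $\tOmega \succeq \bigl[\begin{smallmatrix}\bSigma^{-1} + \tmu\tmu^{\top} & \tmu \\ * & 1\end{smallmatrix}\bigr]$, a second relaxation of an equality). This completes the feasible-point map when $\bSigma \succ 0$.

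Finally, when $\wSigma$ is only positive semidefinite the limiting $\bSigma$ may be singular, so $\bSigma^{-1}$, and hence $\tmu$ and $\tOmega$, are undefined. I would circumvent this by replacing $\wSigma$ with $\wSigma + \epsilon I$ and applying the construction above to the associated positive-definite Lyapunov solution $\bSigma_{\epsilon} := \sum_{\tau \ge 0}(A+BK)^{\tau}(\wSigma + \epsilon I)\bigl((A+BK)^{\tau}\bigr)^{\top}$; since $\bSigma_{\epsilon} - \wSigma - (A+BK)\bSigma_{\epsilon}(A+BK)^{\top} = \epsilon I \succeq 0$, the resulting point is still feasible for \eqref{eq:problem-relaxed} with the original $\wSigma$, and its objective exceeds the original value by only $\tfrac{1}{1-\beta}\angl{\bSigma_{\epsilon} - \bSigma, M_1} = O(\epsilon)$, so letting $\epsilon \downarrow 0$ gives the claimed inequality. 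The delicate points are thus the Schur-complement identifications in the previous paragraph — recognizing which auxiliary blocks make \eqref{eq:problem-relaxed:6}--\eqref{eq:problem-relaxed:7} collapse onto the congruence transforms of \eqref{eq:problem-main:5}--\eqref{eq:problem-main:6} — and the perturbation argument licensing the $\bSigma^{-1}$ substitution in the degenerate regime.
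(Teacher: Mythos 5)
Your proposal is correct and takes essentially the same route as the paper: the identical change of variables $Y = K\bSigma$, $\tmu = \bSigma^{-1}\bmu$, $\tM = \mathrm{diag}(\bSigma,1)\,M\,\mathrm{diag}(\bSigma,1)$, $\tOmega = \mathrm{diag}(\bSigma^{-1},1)\,\bOmega\,\mathrm{diag}(\bSigma^{-1},1)$, followed by the same Schur-complement reductions, with the same two constraints (the Lyapunov equality and the $\bOmega$-equality) identified as the ones relaxed to inequalities. The only substantive addition is your $\epsilon$-perturbation argument for singular $\bSigma$; the paper's proof uses $\bSigma^{-1}$ without comment even though it explicitly permits $\wSigma \succeq 0$, so this is a legitimate (and welcome) patch rather than a different method.
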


\begin{proof}
We start by noticing that we don't need an explicit constraint $K \in \K$, since it is already implicitly enforced in \eqref{eq:problem-main:3}, as the discrete-time Lyapunov equation only has solutions when the closed-loop system is stable (see \Cref{thm:lyapunov_equation}).

To convert the constraints into LMIs, we first need to ``normalize'' the problem via change-of-variables. For this purpose, define the following auxiliary variables:
\begin{gather*}
  \tM := \begin{bmatrix}
  \bSigma \\
  & 1
\end{bmatrix} M \begin{bmatrix}
  \bSigma \\
  & 1
\end{bmatrix},~
\tOmega := \begin{bmatrix}
  \bSigma^{-1} \\
  & 1
\end{bmatrix} \bOmega \begin{bmatrix}
  \bSigma^{-1} \\
  & 1
\end{bmatrix},\\
\tmu := \bSigma^{-1} \bmu,\quad
Y := K \bSigma,\quad
H := A \bSigma + B Y,
\end{gather*}
so that \eqref{eq:problem-main} can be equivalently written as follows:
\begin{subequations}\label{eq:problem-normalized}
\begin{align}
  \min_{\alpha, \tM, \tOmega, \tmu, \bSigma, Y, \ell, H}\quad& \alpha + \tfrac{1}{1-\beta} \angl{\tOmega, \tM} \label{eq:problem-normalized:1}\\
  \mathrm{s.t.}\quad
    & \tM \in \mathcal{S}^{n+1}_{\succeq 0},~ \bSigma \in \mathcal{S}^{n}_{\succeq 0},  \label{eq:problem-normalized:2}\\
    & H = A \bSigma + B Y, \label{eq:problem-normalized:3}\\
    & H \bSigma^{-1} H^{\top} + \wSigma = \bSigma, \label{eq:problem-normalized:4}\\
    & \bSigma\tmu = H \tmu + B\ell + \wmu, \label{eq:problem-normalized:5}\\
    & \tM \succeq \begin{bmatrix}
      \bSigma Q \bSigma + Y^{\top} R Y & Y^{\top} R\ell \\
      * & \ell^{\top} R \ell -\alpha
    \end{bmatrix}, \label{eq:problem-normalized:6}\\
    & \tOmega = \begin{bmatrix}
      \bSigma^{-1} + \tmu \tmu^{\top} & \tmu \\
      * & 1
    \end{bmatrix}, \label{eq:problem-normalized:7}
\end{align}
\end{subequations}

To convert \eqref{eq:problem-normalized:4} into an LMI, we first relax the equality constraint into the following inequality (the direction is selected to avoid unbounded objective)
\begin{equation*}
  H \bSigma^{-1} H^{\top} + \wSigma \preceq \bSigma,
\end{equation*}
which can then be equivalently rewritten as an LMI
\begin{equation}\label{eq:relax_LMI_d}
  \begin{bmatrix}
    \bSigma - \wSigma & H \\
    * & \bSigma  
  \end{bmatrix} \succeq 0
\end{equation}
using Schur complement. Similarly, for \eqref{eq:problem-normalized:6}, we have
\begin{align}
  &\tM \succeq \begin{bmatrix}
      \bSigma Q \bSigma + Y^{\top} R Y & Y^{\top} R\ell \\
      * & \ell^{\top} R \ell -\alpha
    \end{bmatrix} \nonumber\\
  \iff\;& \begin{bmatrix}
    \tM_1 - \bSigma Q \bSigma & \tp \\
    * & \tm_0 + \alpha
  \end{bmatrix} - \begin{bmatrix}
    Y^{\top} \\ \ell^{\top}
  \end{bmatrix} R \begin{bmatrix}
    Y & \ell
  \end{bmatrix} \succeq 0 \nonumber\\
  \iff\;&  \begin{bmatrix}
    \tM_1 - \bSigma Q \bSigma & \tp & Y^{\top} \\
    * & \tm_0 + \alpha & \ell^{\top} \\
    * & * & R^{-1}
  \end{bmatrix} \succeq 0 \nonumber\\
  \iff\;& \begin{bmatrix}
      \tM_1 & \tp & Y^{\top} & \bSigma \\
      * & \tm_0+\alpha & \ell^{\top} & \matO \\
      * & * & R^{-1} & \matO \\
      * & * & * & Q^{-1}
    \end{bmatrix} \succeq 0; \label{eq:relax_LMI_f}
\end{align}
for \eqref{eq:problem-normalized:7}, we first relax it into an inequality (again the direction is selected to avoid unbounded objective), and then we have
\begin{align}
  &\tOmega \succeq \begin{bmatrix}
      \bSigma^{-1} + \tmu \tmu^{\top} & \tmu \\
      * & 1
    \end{bmatrix} \nonumber\\
  \iff\;& \begin{bmatrix}
    \tOmega_1 - \bSigma^{-1} & \tq \\
    * & \tomega_0
  \end{bmatrix} - \begin{bmatrix}
    \tmu \\ 1
  \end{bmatrix} \begin{bmatrix}
    \tmu^{\top} & 1
  \end{bmatrix} \succeq 0 \nonumber\\
  \iff\;& \begin{bmatrix}
      \tOmega_1 - \bSigma^{-1} & \tq & \tmu \\
      * & \tomega_0 & 1 \\
      * & * & 1 \\
    \end{bmatrix} \succeq 0 \nonumber\\
  \iff\;& \begin{bmatrix}
      \tOmega_1  & \tq & \tmu & I\\
      * & \tomega_0 & 1 & 0 \\
      * & * & 1 & 0\\
      * & * & * & \bSigma
    \end{bmatrix} \succeq 0. \label{eq:relax_LMI_g}
\end{align}
Plugging \eqref{eq:relax_LMI_d}, \eqref{eq:relax_LMI_f} and \eqref{eq:relax_LMI_g} into \eqref{eq:problem-normalized} completes the proof.
\end{proof}

\begin{remark}
  Even though all the constraints are relaxed to LMIs, the program itself is still a BMI due to the coupling in the objective function. Furthermore, all the relaxations introduced are actually exact, except for the one in \eqref{eq:problem-relaxed:4} that accounts for the suboptimality of our proposed solution.
\end{remark}

\subsection{An Alternating Optimization Algorithm}\label{sec:algorithm}

So far, we have shown how to relax \eqref{eq:problem-main} to \eqref{eq:problem-relaxed}, where the latter has the favorable property that the LMI constraints are linear in the decision variables. However, the relaxed problem \eqref{eq:problem-relaxed} is still hard to solve since it is still heavily coupled in the following senses: i) the objective \eqref{eq:problem-relaxed:1} involves the product of two decision variables $\tOmega$ and $\tM$, and ii) \eqref{eq:problem-relaxed:5} is coupled and cannot be further relaxed without compromising the invariance property of the stationary mean value. As a result, standard BMI optimization solvers still cannot solve program \eqref{eq:problem-relaxed} directly.

To compute \eqref{eq:problem-relaxed}, we design the following optimization algorithm that alternates between two subsets of variables, $\bxi := (\tOmega, \tmu, \ell)$, and $\btheta := (\alpha, M, \bSigma, Y, H)$, and repetitively executes the following two steps:
\begin{itemize}
    \item \textbf{Step I:}
 we first optimize over $\bxi = (\tOmega, \tmu, \ell)$ using $\btheta^{\star} = (\alpha^{\star}, M^{\star}, \bSigma^{\star}, Y^{\star}, H^{\star})$ obtained in the last iteration:
\begin{subequations}\label{eq:problem-relaxed-stage_1}
\begin{align}
  \min_{\tOmega, \tmu, \ell}\quad& \alpha^{\star} + \tfrac{1}{1-\beta} \angl{\tOmega, \tM^{\star}}, \label{eq:problem-relaxed-stage_1:1}\\
  \mathrm{s.t.}\quad
    & \tOmega \in \mathcal{S}^{n+1}_{\succeq 0}, \label{eq:problem-relaxed-stage_1:2}\\
    & \bSigma^{\star}\tmu = H^{\star}  \tmu + B^{\star}\ell + \wmu, \label{eq:problem-relaxed-stage_1:3}\\
    & \begin{bmatrix}
      \tM^{\star}_1 & \tp^{\star} & (Y^{\star})^{\top} & \bSigma^{\star} \\
      * & \tm^{\star}_0+\alpha^{\star} & \ell^{\top} & \matO \\
      * & * & R^{-1} & \matO \\
      * & * & * & Q^{-1}
    \end{bmatrix} \succeq 0, \label{eq:problem-relaxed-stage_1:4}\\
    & \begin{bmatrix}
      \tOmega_1  & \tq & \tmu & I\\
      * & \tomega_0 & 1 & 0 \\
      * & * & 1 & 0\\
      * & * & * & \bSigma^{\star}
    \end{bmatrix} \succeq 0.\label{eq:problem-relaxed-stage_1:5}
\end{align}
\end{subequations}

\item \textbf{Step II:} we then use $\bxi_i^{\star} = (\tOmega^{\star}, \tmu^{\star}, \ell^{\star})$ obtained in \eqref{eq:problem-relaxed-stage_1} to optimize over $\btheta = (\alpha, M, \bSigma, Y, H)$ as follows:
\begin{subequations}\label{eq:problem-relaxed-stage_2}
\begin{align}
  \min_{\alpha, \tM, \bSigma, Y, H}\quad& \alpha + \tfrac{1}{1-\beta} \angl{\tOmega^{\star}, \tM}, \label{eq:problem-relaxed-stage_2:1}\\
  \mathrm{s.t.}\quad
    & \tM \in \mathcal{S}^{n+1}_{\succeq 0},~ \bSigma \in \mathcal{S}^{n}_{\succeq 0}, \label{eq:problem-relaxed-stage_2:2}\\
    & H = A\bSigma+BY, \label{eq:problem-relaxed-stage_2:3}\\
    & \begin{bmatrix}
        \bSigma - \wSigma & H \\
        * & \bSigma  
      \end{bmatrix} \succeq 0, \label{eq:problem-relaxed-stage_2:4}\\
    & \bSigma\tmu^{\star} = H  \tmu^{\star} + B\ell^{\star} + \wmu, \label{eq:problem-relaxed-stage_2:5}\\
    & \begin{bmatrix}
      \tM_1 & \tp & Y^{\top} & \bSigma \\
      * & \tm_0+\alpha & (\ell^{\star})^{\top} & \matO \\
      * & * & R^{-1} & \matO \\
      * & * & * & Q^{-1}
    \end{bmatrix} \succeq 0, \label{eq:problem-relaxed-stage_2:6}\\
    & \begin{bmatrix}
      \tOmega^{\star}_1  & \tq^{\star} & \tmu^{\star} & I\\
      * & \tomega^{\star}_0 & 1 & 0 \\
      * & * & 1 & 0\\
      * & * & * & \bSigma
    \end{bmatrix} \succeq 0. \label{eq:problem-relaxed-stage_2:7}
\end{align}
\end{subequations}
\end{itemize}

We point out that both \eqref{eq:problem-relaxed-stage_1} and \eqref{eq:problem-relaxed-stage_2} are now convex in the decision variables, and are thus solvable using generic convex optimization solvers (e.g., CVX \cite{cvx}).

\begin{remark}[Numerical considerations]
  For numerical stability, practical implementation may also include additional bounds to avoid extra large search spaces (e.g., $\tr(\tM) \leq B_{M}$ and $\tr(\tOmega) \leq B_{\varOmega}$ for sufficiently large $B_{M}$ and $B_{\varOmega}$), as well as margins to avoid numerical issues (e.g., replacing $0$ with $\varepsilon I$ in LMI constraints with appropriately small $\varepsilon$). In the first iteration, $\btheta^{\star}_0 = (\alpha_0^{\star}, M_0^{\star}, \bSigma_0^{\star}, Y_0^{\star}, H_0^{\star})$ needs to be properly initialized to ensure feasibility of the program.
\end{remark}

\subsection{Convergence Guarantee}\label{sec:convergence}

We conclude this section by proving a convergence guarantee for the proposed alternating optimization algorithm, which in turn is based on the boundedness of the objective. For this purpose, we first prove the following lemma.

\begin{lemma}\label{thm:bounded_objective}
  For any matrices $M, \bOmega \in \S^{n+1}_{\succeq 0}$ satisfying
  \begin{equation*}
    M \succeq \begin{bmatrix}
      \bSigma + \bmu \bmu^{\top} & \bmu \\
      * & 1
    \end{bmatrix},~
    \bOmega \succeq \begin{bmatrix}
      Q + K^{\top} R K & K^{\top} R \ell \\
      * & \ell^{\top} R \ell - \alpha
    \end{bmatrix}
  \end{equation*}
  for some $Q \in \S^n_{\succ 0}$, $R \in \S^n_{\succ 0}$, $\bmu \in \R^n$, $\bSigma \in \S^n_{\succeq 0}$, $K \in \R^{m \times n}$, $\ell \in \R^m$, we have
  \begin{equation*}
    \alpha + \tfrac{1}{1-\beta} \angl{\bOmega, M} \geq 0.
  \end{equation*}
\end{lemma}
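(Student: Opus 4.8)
The plan is to lower-bound the objective $\alpha + \tfrac{1}{1-\beta}\angl{\bOmega, M}$ by first establishing $\angl{\bOmega, M} \ge (-\alpha)^+$, and then finishing with a short case split on the sign of $\alpha$. Throughout I would use the elementary fact that $\angl{X, Y} = \tr(X^{\top} Y) \ge 0$ whenever $X, Y \in \S^{n+1}_{\succeq 0}$ (e.g.\ by writing $\angl{X,Y} = \tr(Y^{1/2} X Y^{1/2})$). Abbreviate the two right-hand sides as $\Omega_0 := \begin{bsmallmatrix} \bSigma + \bmu\bmu^{\top} & \bmu \\ * & 1 \end{bsmallmatrix}$ and $C_0 := \begin{bsmallmatrix} Q + K^{\top} R K & K^{\top} R\ell \\ * & \ell^{\top} R\ell - \alpha \end{bsmallmatrix}$, so the hypotheses read $M \succeq \Omega_0$, $\bOmega \succeq C_0$, with $M, \bOmega \succeq 0$; note in addition $\Omega_0 \succeq 0$, since $\Omega_0 = \begin{bsmallmatrix} \bSigma & \matO \\ * & 0 \end{bsmallmatrix} + \begin{bsmallmatrix} \bmu \\ 1 \end{bsmallmatrix}\begin{bsmallmatrix} \bmu \\ 1 \end{bsmallmatrix}^{\top}$.

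The first step is the cross-term identity
\[
  \angl{\bOmega, M} - \angl{C_0, \Omega_0} = \angl{\bOmega,\, M - \Omega_0} + \angl{\bOmega - C_0,\, \Omega_0} \ge 0,
\]
where each summand on the right pairs two PSD matrices (using $\bOmega \succeq 0$ with $M - \Omega_0 \succeq 0$, and $\bOmega - C_0 \succeq 0$ with $\Omega_0 \succeq 0$), hence is nonnegative; thus $\angl{\bOmega, M} \ge \angl{C_0, \Omega_0}$. The second step evaluates $\angl{C_0, \Omega_0}$: observe that $C_0 + \begin{bsmallmatrix} \matO & \matO \\ * & \alpha \end{bsmallmatrix} = \begin{bsmallmatrix} Q & \matO \\ * & 0 \end{bsmallmatrix} + \begin{bsmallmatrix} K & \ell \end{bsmallmatrix}^{\top} R \begin{bsmallmatrix} K & \ell \end{bsmallmatrix} \succeq 0$ because $Q \succeq 0$ and $R \succ 0$, so pairing this PSD matrix with $\Omega_0 \succeq 0$ and noting $\angl{\begin{bsmallmatrix} \matO & \matO \\ * & \alpha \end{bsmallmatrix}, \Omega_0} = \alpha$ gives $\angl{C_0, \Omega_0} + \alpha \ge 0$, i.e.\ $\angl{C_0, \Omega_0} \ge -\alpha$. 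Combining this with the trivial bound $\angl{\bOmega, M} \ge 0$ yields $\angl{\bOmega, M} \ge \max\{0, -\alpha\} = (-\alpha)^+$.

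The last step is the case split. If $\alpha \ge 0$, then $\alpha + \tfrac{1}{1-\beta}\angl{\bOmega, M} \ge \alpha \ge 0$ using only $\angl{\bOmega, M} \ge 0$. If $\alpha < 0$, then $\angl{\bOmega, M} \ge -\alpha > 0$, and since $\beta \in (0,1)$ gives $\tfrac{1}{1-\beta} \ge 1$, we get $\tfrac{1}{1-\beta}\angl{\bOmega, M} \ge -\alpha$, so the objective is $\ge \alpha + (-\alpha) = 0$. I do not anticipate a serious obstacle; the one subtlety is getting the direction of the PSD pairings right — in particular one cannot pair $C_0$ directly with $M$, since $C_0$ need not be PSD, which is what forces the route through $\Omega_0$ — together with the small observation that the factor $\tfrac{1}{1-\beta} \ge 1$ exists precisely to absorb a negative $\alpha$, which is also why this is the natural bound to feed into the convergence argument for the alternating algorithm.
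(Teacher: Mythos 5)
Your proof is correct and follows essentially the same route as the paper's: both lower-bound $\angl{\bOmega, M}$ by the inner product of the two right-hand-side matrices via PSD monotonicity, observe that this inner product exceeds $-\alpha$ because the remaining $Q$- and $R$-terms pair nonnegatively with the (PSD) moment matrix, and then absorb a negative $\alpha$ using $\tfrac{1}{1-\beta} \geq 1$. The only cosmetic differences are that you package the two monotonicity steps as a single cross-term identity and defer the case split on the sign of $\alpha$ to the end, whereas the paper splits cases first and retains the slightly sharper bound $-\tfrac{\beta}{1-\beta}\alpha$ when $\alpha < 0$.
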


\begin{proof}
  When $\alpha \geq 0$, it is obvious that
  \begin{equation*}
    \alpha + \tfrac{1}{1-\beta} \angl{\bOmega, M}
    \geq \alpha
    \geq 0,
  \end{equation*}
  where we use the fact that $\bOmega, M \succeq 0$ implies $\angl{\bOmega, M} \geq 0$.
  
  Otherwise, when $\alpha < 0$, we have
  \begin{subequations}\label{eq:bounded_objective:e2}
  \begin{align}
    &\alpha + \tfrac{1}{1-\beta} \angl{\bOmega, M} \label{eq:bounded_objective:e2-1}\\
    \geq{}& \alpha + \tfrac{1}{1-\beta} \angl*{\begin{bsmallmatrix}
      \bSigma + \bmu \bmu^{\top} & \bmu \\
      * & 1
    \end{bsmallmatrix}, \begin{bsmallmatrix}
      Q + K^{\top} R K & K^{\top} R \ell \\
      * & \ell^{\top} R \ell - \alpha
    \end{bsmallmatrix}} \label{eq:bounded_objective:e2-2}\\
    ={}& \alpha + \tfrac{1}{1-\beta} \prn*{ \angl{\bSigma + \bmu\bmu^{\top}, Q} + \angl*{\bOmega, \begin{bsmallmatrix}
      K^{\top} \\ \ell^{\top}
    \end{bsmallmatrix} R \begin{bsmallmatrix}
      K & \ell
    \end{bsmallmatrix}} - \alpha } \label{eq:bounded_objective:e2-3}\\
    \geq{}& - \tfrac{\beta}{1-\beta} \alpha
    \geq 0. \label{eq:bounded_objective:e2-4}
  \end{align}
  \end{subequations}
  Here in \eqref{eq:bounded_objective:e2-2} we use the fact that $\angl{X_1, P} \geq \angl{X_2, P}$ if $P \succeq 0$ and $X_1 \succeq X_2$; in \eqref{eq:bounded_objective:e2-3} we apply the linearity of trace; in \eqref{eq:bounded_objective:e2-4} we use the positive-semidefiniteness of all the matrices involved and $\beta \in (0,1)$. This completes the proof.
\end{proof}

\begin{theorem}\label{thm:convergence}
  The value sequence produced by the alternating optimization algorithm converges.
\end{theorem}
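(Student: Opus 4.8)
The statement follows from the monotone convergence theorem once we establish two facts about the scalar sequence of objective values generated by the algorithm: it is monotonically non-increasing, and it is bounded below. Index the iterations by $i=1,2,\dots$; let $\bxi^\star_i=(\tOmega^\star_i,\tmu^\star_i,\ell^\star_i)$ be the minimizer returned by Step~I \eqref{eq:problem-relaxed-stage_1} of iteration $i$ (computed from the previous $\btheta^\star_{i-1}$) and $\btheta^\star_i=(\alpha^\star_i,\tM^\star_i,\bSigma^\star_i,Y^\star_i,H^\star_i)$ the minimizer returned by Step~II \eqref{eq:problem-relaxed-stage_2} (computed from $\bxi^\star_i$), with optimal values $v^{\mathrm{I}}_i$ and $v^{\mathrm{II}}_i$; the value sequence is then $v^{\mathrm{I}}_1,v^{\mathrm{II}}_1,v^{\mathrm{I}}_2,v^{\mathrm{II}}_2,\dots$. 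As in the numerical-considerations remark, we assume $\btheta^\star_0$ is initialized so that every sub-problem is feasible.

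\textbf{Monotonicity.} The key observation is that each sub-problem is exactly the relaxed BMI \eqref{eq:problem-relaxed} with one block of variables frozen at its current value: Step~I is \eqref{eq:problem-relaxed} with $\btheta$ fixed (so \eqref{eq:problem-relaxed:3}--\eqref{eq:problem-relaxed:4} are trivially satisfied and dropped), and Step~II is \eqref{eq:problem-relaxed} with $\bxi$ fixed. Hence $(\bxi^\star_{i-1},\btheta^\star_{i-1})$, being feasible for \eqref{eq:problem-relaxed}, is in particular feasible for Step~I of iteration $i$, where it attains the objective value $v^{\mathrm{II}}_{i-1}$; minimality gives $v^{\mathrm{I}}_i\leq v^{\mathrm{II}}_{i-1}$. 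Symmetrically, I would check that $(\bxi^\star_i,\btheta^\star_{i-1})$ is feasible for Step~II of iteration $i$: constraints \eqref{eq:problem-relaxed-stage_2:2}--\eqref{eq:problem-relaxed-stage_2:4} involve only $\btheta$ and hold because $\btheta^\star_{i-1}$ was feasible for \eqref{eq:problem-relaxed}, while the linking equality \eqref{eq:problem-relaxed-stage_2:5} and the two LMIs \eqref{eq:problem-relaxed-stage_2:6}--\eqref{eq:problem-relaxed-stage_2:7} coincide, block for block, with the constraints \eqref{eq:problem-relaxed-stage_1:3}, \eqref{eq:problem-relaxed-stage_1:4}, \eqref{eq:problem-relaxed-stage_1:5} that $\bxi^\star_i$ was just required to satisfy in Step~I (the frozen blocks $\bSigma^\star_{i-1},H^\star_{i-1},Y^\star_{i-1},\tM^\star_{i-1},\alpha^\star_{i-1}$ being the same in both). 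At that point the Step~II objective equals $v^{\mathrm{I}}_i$, so $v^{\mathrm{II}}_i\leq v^{\mathrm{I}}_i$. Chaining, $v^{\mathrm{II}}_i\leq v^{\mathrm{I}}_i\leq v^{\mathrm{II}}_{i-1}$, so the value sequence is non-increasing.

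\textbf{Boundedness below.} Every iterate is a feasible point of \eqref{eq:problem-relaxed}, so it suffices to show $\alpha+\tfrac{1}{1-\beta}\angl{\tOmega,\tM}\geq 0$ on the feasible set of \eqref{eq:problem-relaxed}. At such a point $\bSigma\succ 0$ is automatic: the positive-semidefinite block matrix in \eqref{eq:problem-relaxed:7} has $\bSigma$ as its trailing block and an identity in the corresponding off-diagonal block, and a block matrix $\succeq 0$ requires the range of its off-diagonal block to lie in that of the trailing block, which here forces $\bSigma$ nonsingular. Granting $\bSigma\succ 0$ and recalling $Q,R\succ 0$, I would reverse the Schur-complement steps \eqref{eq:relax_LMI_f} and \eqref{eq:relax_LMI_g} and undo the congruence normalization, setting $\bmu:=\bSigma\tmu$, $K:=Y\bSigma^{-1}$, and taking the matrices that play the roles of $M$ and $\bOmega$ in \Cref{thm:bounded_objective} to be $\mathrm{diag}(\bSigma,1)\,\tOmega\,\mathrm{diag}(\bSigma,1)$ and $\mathrm{diag}(\bSigma^{-1},1)\,\tM\,\mathrm{diag}(\bSigma^{-1},1)$, respectively; then \eqref{eq:problem-relaxed:7} and \eqref{eq:problem-relaxed:6} become precisely the two hypotheses of \Cref{thm:bounded_objective}. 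Since the two congruences are mutually inverse, the trace pairing is preserved, so $\angl{\cdot,\cdot}$ of the two recovered matrices equals $\angl{\tOmega,\tM}$, and \Cref{thm:bounded_objective} yields $\alpha+\tfrac{1}{1-\beta}\angl{\tOmega,\tM}\geq 0$. Thus the value sequence is bounded below by $0$, and being non-increasing it converges.

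\textbf{Main obstacle.} The substantive part is the constraint-by-constraint bookkeeping in the monotonicity step: verifying that when the roles of ``frozen'' and ``free'' are swapped between \eqref{eq:problem-relaxed-stage_1} and \eqref{eq:problem-relaxed-stage_2}, the carried-over iterate remains feasible, and in particular that the coupled equality \eqref{eq:problem-relaxed-stage_2:5}/\eqref{eq:problem-relaxed-stage_1:3} and the two structural LMIs are genuinely common to both sub-problems so that nothing is silently violated. A minor technical point is confirming that $\bSigma$ stays positive definite along the iterates so that \Cref{thm:bounded_objective} applies verbatim; as noted, this is already guaranteed by \eqref{eq:problem-relaxed:7}. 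Finally, observe that this argument only shows the value sequence has a limit, consistent with the claim (and with the earlier remark that the limit point may be suboptimal).
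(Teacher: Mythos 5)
Your proof is correct and follows essentially the same route as the paper's: monotone decrease of the value sequence from the alternating-minimization structure, a lower bound of $0$ obtained from \Cref{thm:bounded_objective}, and the Monotone Convergence Theorem. You additionally spell out the feasibility bookkeeping when swapping frozen and free blocks, and the congruence/Schur-complement translation (together with the positive-definiteness of $\bSigma$ forced by \eqref{eq:problem-relaxed:7}) needed to invoke \Cref{thm:bounded_objective} in the normalized variables--details the paper leaves implicit.
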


\begin{proof}
  Denote by $J(\bxi, \btheta) := \alpha + \tfrac{1}{1-\beta} \angl{\tOmega, \tM}$ the value of \eqref{eq:problem-relaxed}, and by $\varXi_i$, $\varTheta_i$ the constraint set in \eqref{eq:problem-relaxed-stage_1}, \eqref{eq:problem-relaxed-stage_2}, respectively. Let $\JI_i$, $\JII_i$ denote the optimal values obtained in Step I and II of the $i$\tsup{th} iteration, respecitvely. Then the algorithm ensures that
  \begin{align*}
    \JI_i &= \min_{\bxi_i \in \varXi_i} J(\bxi_i, \btheta^{\star}_{i-1})
    \leq J(\bxi^{\star}_{i-1}, \btheta^{\star}_{i-1})
    = \JII_{i-1},\\
    \JII_i &= \min_{\btheta_i \in \varTheta_i} J(\bxi^{\star}_i, \btheta_i)
    \leq J(\bxi^{\star}_i, \btheta^{\star}_{i-1})
    = \JI_i.
  \end{align*}
  Therefore, the value sequence $\brac{\JI_1, \JII_1, \JI_2, \JII_2, \ldots}$ is monotone decreasing. Meanwhile, by \Cref{thm:bounded_objective} we also have
  \begin{equation*}
    \JII_i, \JI_{i+1} \geq 0,\quad \forall i \in \N.
  \end{equation*}
  Finally, we can conclude the convergence of the value sequence by the Monotone Convergence Theorem.
\end{proof}
  \section{Simulations}\label{sec:4-simulations}

In this section, we evaluate the numerical performance of the proposed alternating optimization algorithm in two representative settings, Inverted Pendulum \cite{hespanha2018linear} and Multi-zone Heating, Ventilation, and Air Conditioning (HVAC) Thermal Control \cite{li2021distributed}. We illustrate the convergence behavior of our algorithm, as well as the suboptimality gaps arising from the relaxation and superority over the naive affine LQR controller. We use both the \textit{W-CVaR value} $V_{\wcvar}(K,\ell) := \cP_{\bmu, \bSigma}\bcvar[c_{K,\ell}(\bx)]$ and the \textit{LQR value} $V_{\lqr}(K,\ell)$ (see \Cref{sec:appdx-LQR_value} for details) as metrics.

The alternating optimization algorithm is implemented with CVX using MOSEK optimizer, iteratievely solving \eqref{eq:problem-relaxed-stage_1} and \eqref{eq:problem-relaxed-stage_2} until convergence to obtain the solution controller $u = K_{\sol} x + \ell_{\sol}$. For low-dimensional systems, the W-CVaR-optimal affine controller $u = K^{\star} x + \ell^{\star}$ is approximated via manually-calibrated grid search (to the accuracy of $10^{-3}$). The initialization of $\btheta^{\star}_0$ is obtained by solving \eqref{eq:WCVaR_equivalence} with $(K,\ell)$ assigned as the naive affine LQR controller $(K_{\lqr}, \ell_{\lqr})$. 

Details of the system parameters are deferred to \Cref{sec:apdx-simulation_setting}.

\subsection{Inverted Pendulum}\label{sec:simulation-Pendulum}

We first consider the Inverted Pendulum setting. A typical convergence behavior of our algorithm is shown in \Cref{fig:simulation_pendulum-convergence}. \Cref{fig:simulation_pendulum-convergence:K,fig:simulation_pendulum-convergence:ell} show the magnitude of step-wise changes $\Delta K$ and $\Delta \ell$ over time, where both gradually converge to 0. The improvement in terms of W-CVaR value in \Cref{fig:simulation_pendulum-convergence:value} is made legible by a constant shift and displayed on a log scale. All figures illustrate a pattern where the algorithm converges within 500 iterations, agreeing with the theoretical guarantee.

\begin{figure}[htb]
  \centering
  \begin{subfigure}[b]{0.49\linewidth}
    \centering
    \includegraphics[width=0.99\linewidth]{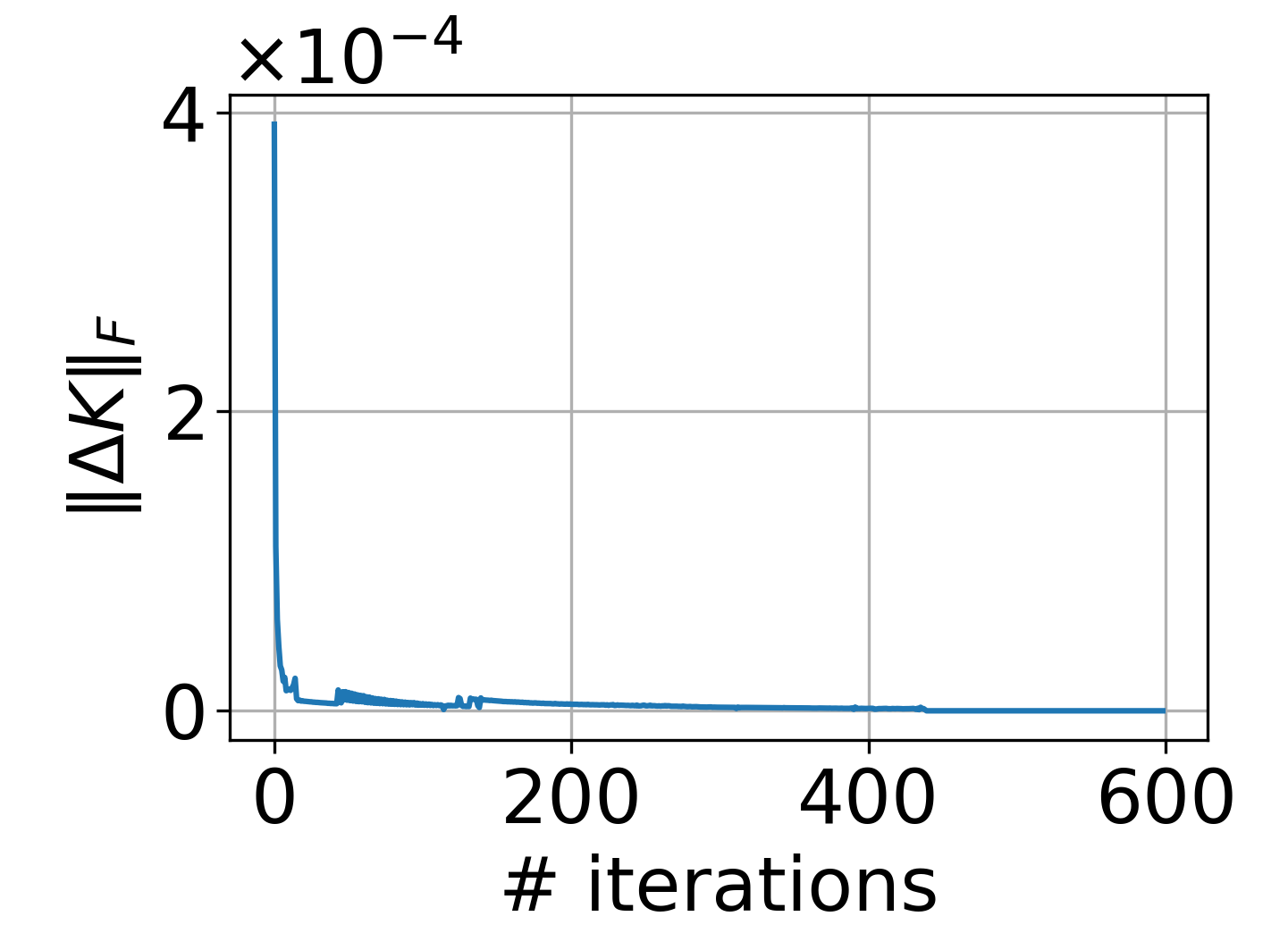}
    \caption{$\norm{\Delta K}_{\mathrm{F}}$}
    \label{fig:simulation_pendulum-convergence:K}
  \end{subfigure}
  \begin{subfigure}[b]{0.49\linewidth}
    \centering
    \includegraphics[width=0.99\linewidth]{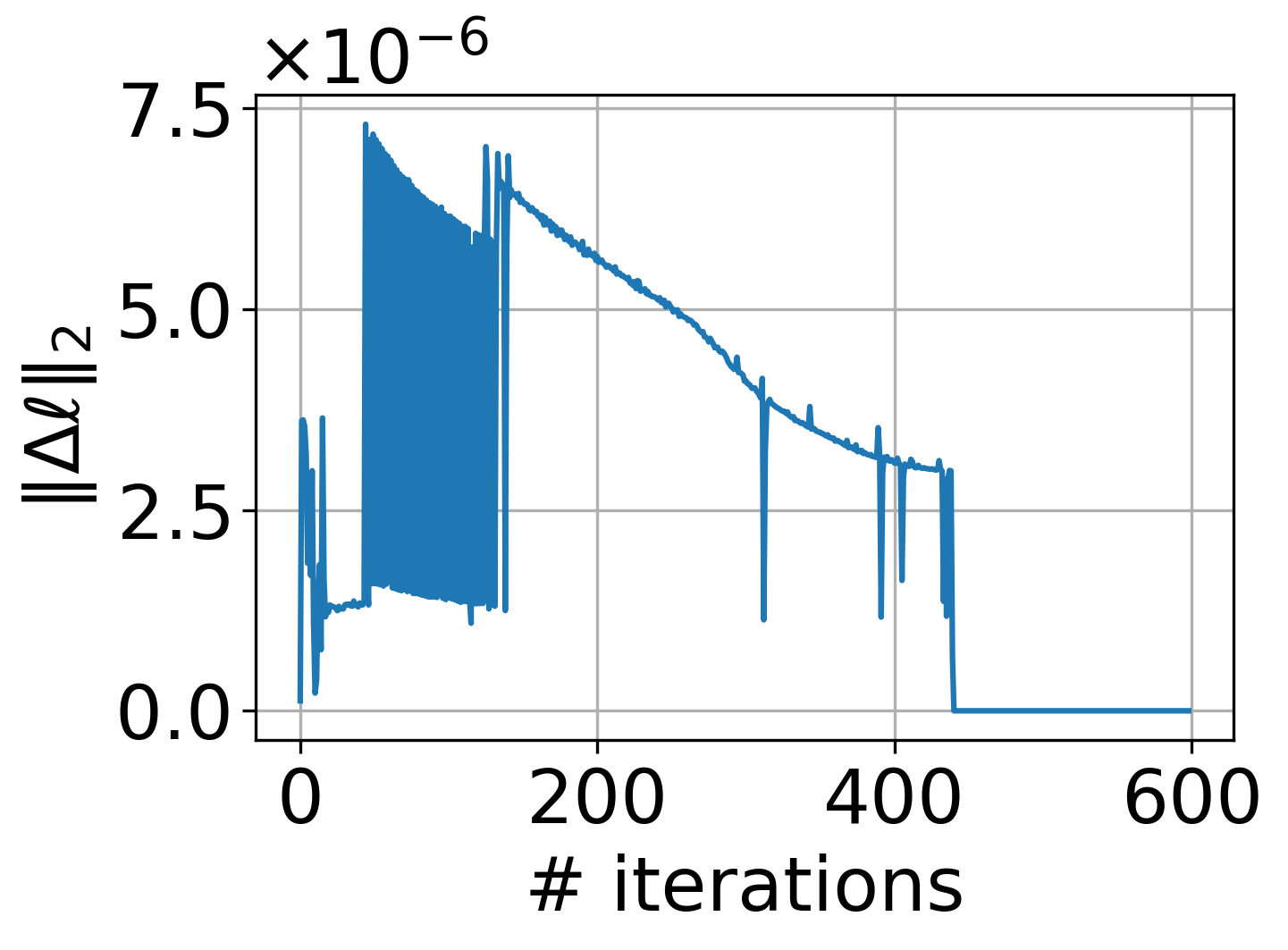}
    \caption{$\norm{\Delta \ell}_{2}$}
    \label{fig:simulation_pendulum-convergence:ell}
  \end{subfigure}

  \begin{subfigure}[b]{0.84\linewidth}
    \centering
    \includegraphics[width=0.99\linewidth]{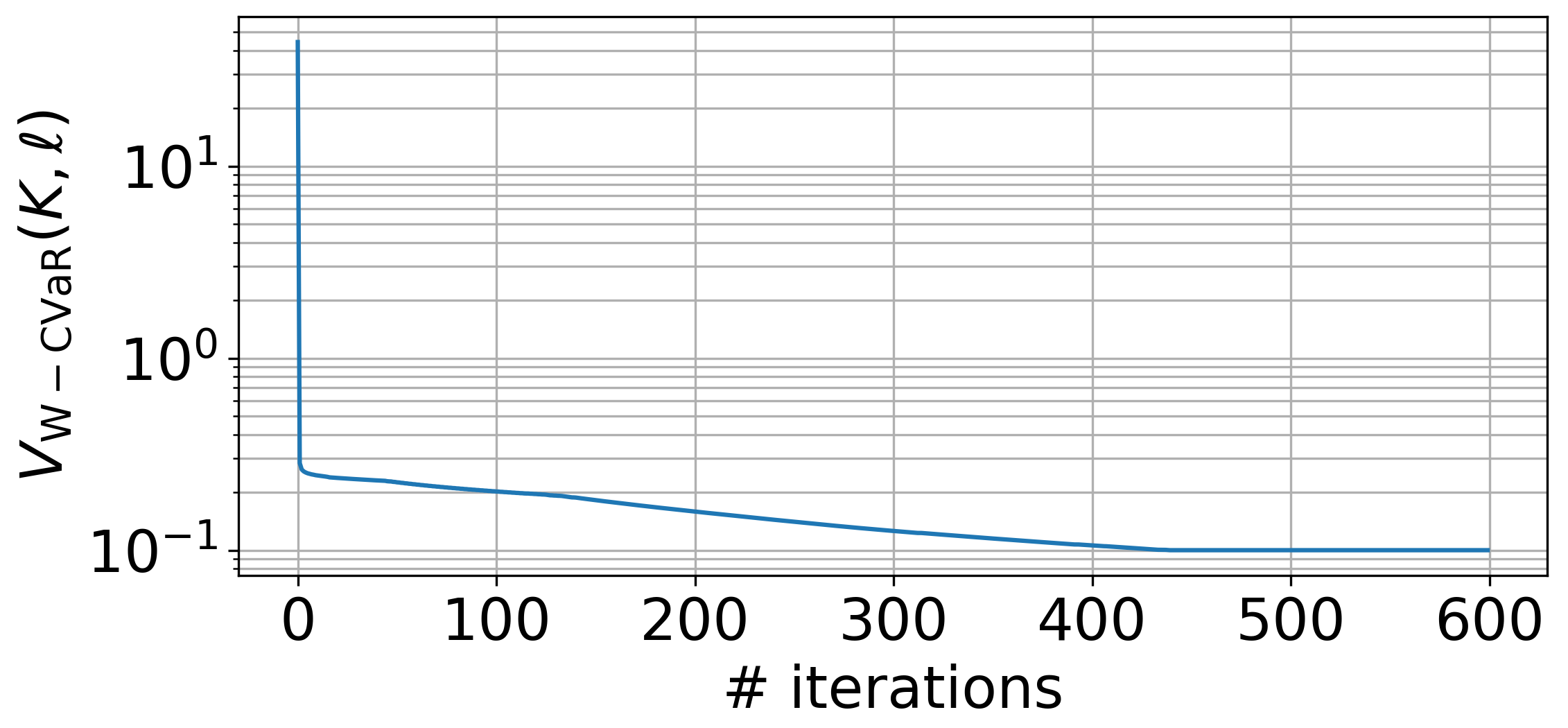}
    \caption{W-CVaR value (log-scale, constant-shifted)}
    \label{fig:simulation_pendulum-convergence:value}
  \end{subfigure}
  \caption{Convergence of the algorithm ($L = 1$)}
  \label{fig:simulation_pendulum-convergence}
\end{figure}

The solution $(K_{\sol}, \ell_{\sol})$ is then compared against two baselines in terms of W-CVaR and LQR values: the LQR controller $u = K_{\lqr} x$ (which turns out to perform better than the naive affine LQR controller), and the W-CVaR-optimal controller $(K^{\star}, \ell^{\star})$. \Cref{fig:simulation_pendulum-value:WCVaR} shows the W-CVaR values at different system parameters $L \in \set{1,2,3,4,5}$. It can be observed that the solution always outperforms the LQR baseline, and the performance gap increases as the system becomes less controllable (i.e., as $L$ grows). Although the solution is suboptimal as compared to the W-CVaR-optimal controller, the suboptimality gap decreases as $L$ grows, indicating a higher benefit of the proposed algorithm in uncertain environments. Finally, \Cref{fig:simulation_pendulum-value:WCVaR} showcases that risk-sensitiveness is obtained at a price, as the solution induces a (moderately) higher LQR value as compared to the LQR baseline.

\begin{figure}[htb]
  \centering
  \begin{subfigure}[b]{0.49\linewidth}
    \centering
    \includegraphics[width=0.99\linewidth]{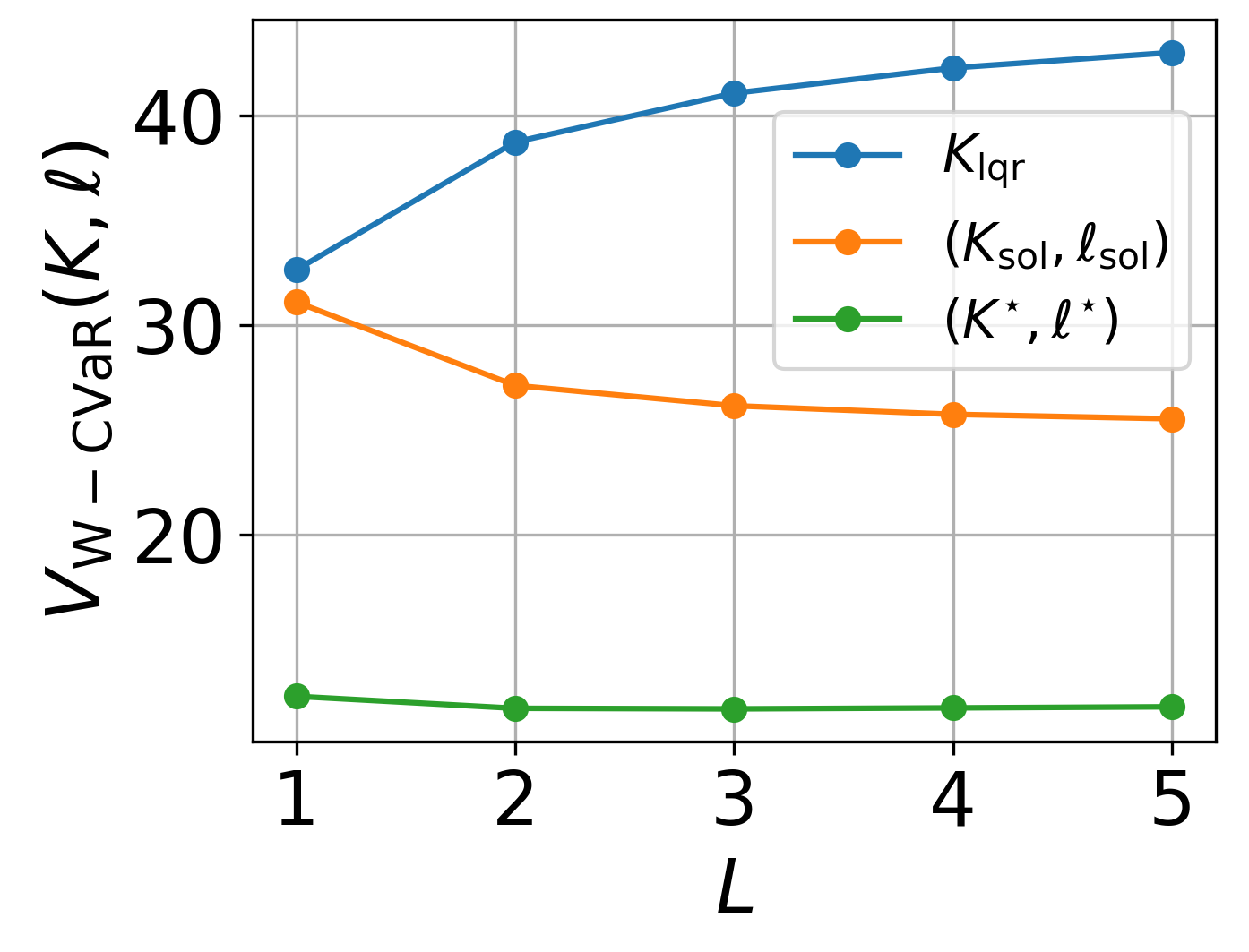}
    \caption{W-CVaR values}
    \label{fig:simulation_pendulum-value:WCVaR}
  \end{subfigure}
  \begin{subfigure}[b]{0.49\linewidth}
    \centering
    \includegraphics[width=0.99\linewidth]{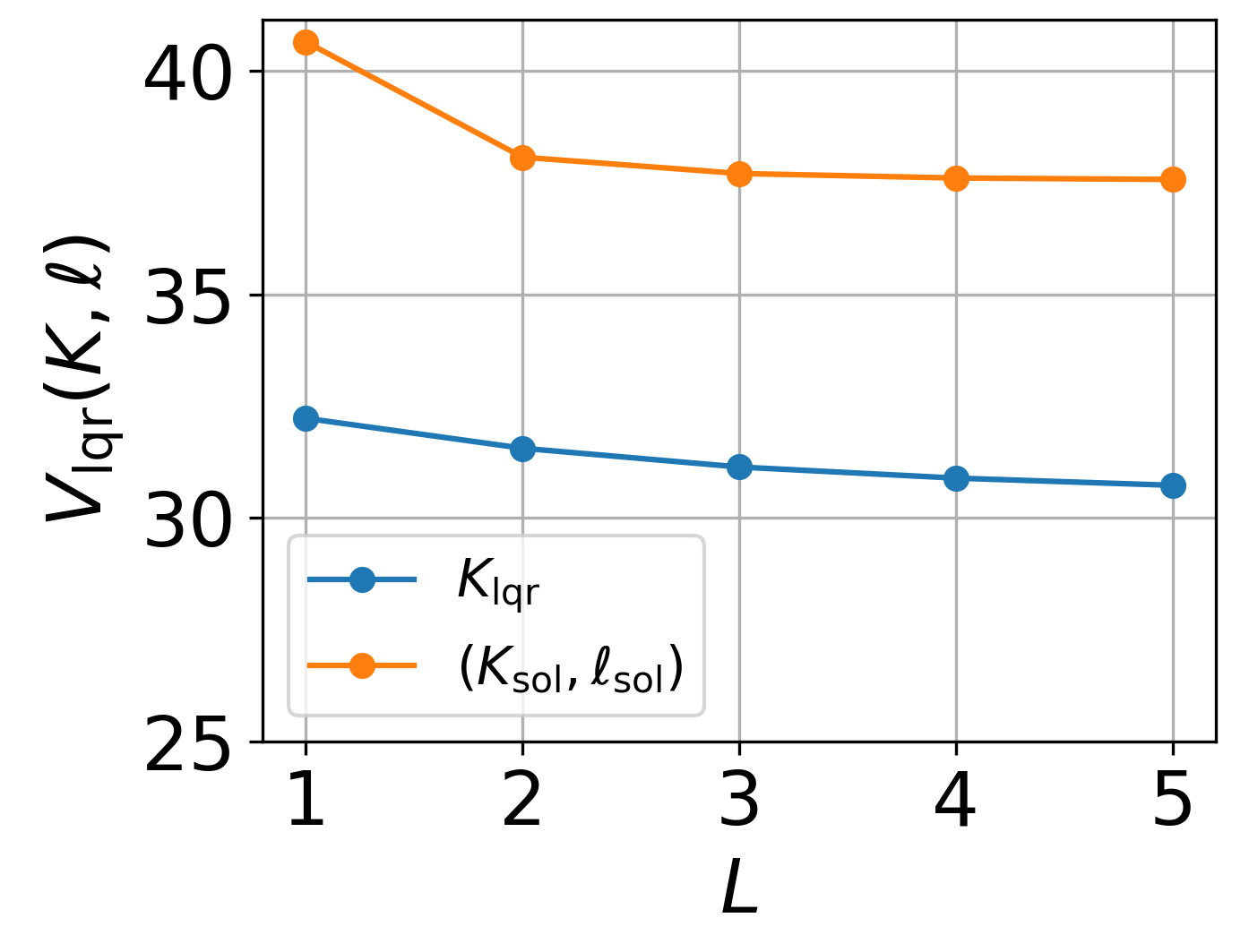}
    \caption{LQR values}
    \label{fig:simulation_pendulum-value:LQR}
  \end{subfigure}
  \caption{Performance of the solution controller against baselines for Inverted Pendulum with different pole lengths $L$.}
  \label{fig:simulation_pendulum-value}
\end{figure}

\subsection{The Multi-zone HVAC Thermal Control}\label{sec:simulation-HVAC}

\begin{wrapfigure}{r}{1.6cm}
    \centering
    \vspace{-25pt}
    \begin{tikzpicture}
      \foreach \x in {0pt,16pt,32pt,48pt}{
        \draw[black, line width=0.5pt] (0pt, \x) -- (48pt, \x);
        \draw[black, line width=0.5pt] (\x,0pt) -- (\x,48pt);
      }
      \node at (8pt,40pt) {\small 1};
      \node at (24pt,40pt) {\small 2};
      \node at (40pt,40pt) {\small 3};
      \node at (8pt,24pt) {\small 4};
      \node at (24pt,24pt) {\small 5};
      \node at (40pt,24pt) {\small 6};
      \node at (8pt,8pt) {\small 7};
      \node at (24pt,8pt) {\small 8};
      \node at (40pt,8pt) {\small 9};
    \end{tikzpicture}
    \caption*{HVAC grid ($d = 3$).}
    \vspace{-10pt}
\end{wrapfigure}
We also evaluate our method in a setting that regulates thermal control in multi-zone HVAC systems modelled as an LTI system \cite{li2021distributed}. We consider an HVAC system with $d^2$ zones arranged in a $d$-by-$d$ grid.

The solution $(K_{\sol}, \ell_{\sol})$ is again compared against the naive affine LQR controller baseline. \Cref{fig:simulation_HVAC:W-CVaR} shows that the proposed algorithm produces a risk-sensitive controller that significantly outperforms the baseline naive LQR controller in terms of W-CVaR values, and that the edge keeps growing as the number of zones increases. Furthermore, \Cref{fig:simulation_HVAC:LQR} illustrates that such benefit comes at a negligible loss of LQR values, which can be attributed to the invertibility of the input matrix $B$ (see \Cref{sec:apdx-simulation_setting}).

\begin{figure}[htb]
  \centering
  \begin{subfigure}[b]{0.49\linewidth}
    \centering
    \includegraphics[width=0.99\linewidth]{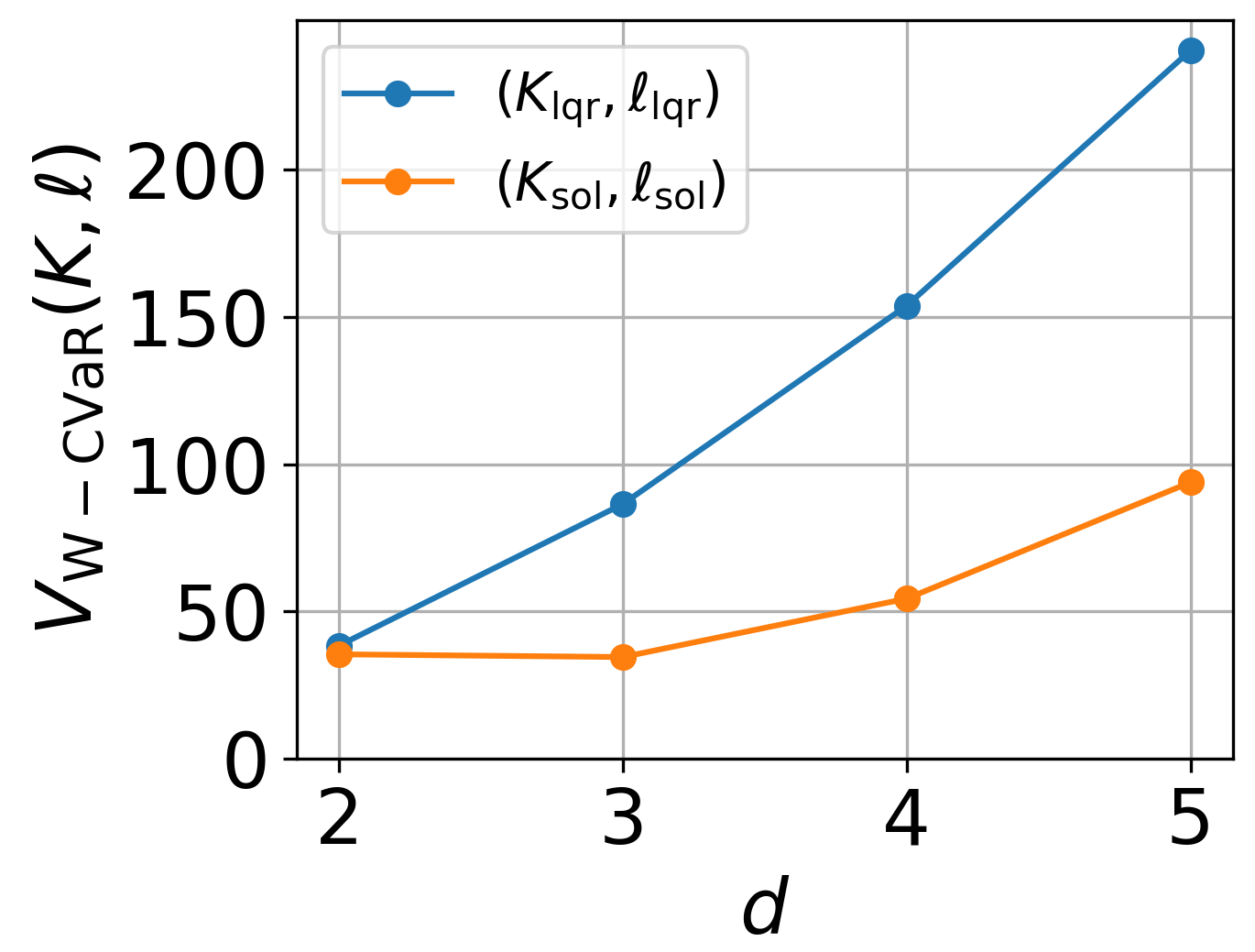}
    \caption{W-CVaR values}
    \label{fig:simulation_HVAC:W-CVaR}
  \end{subfigure}
  \begin{subfigure}[b]{0.49\linewidth}
    \centering
    \includegraphics[width=0.99\linewidth]{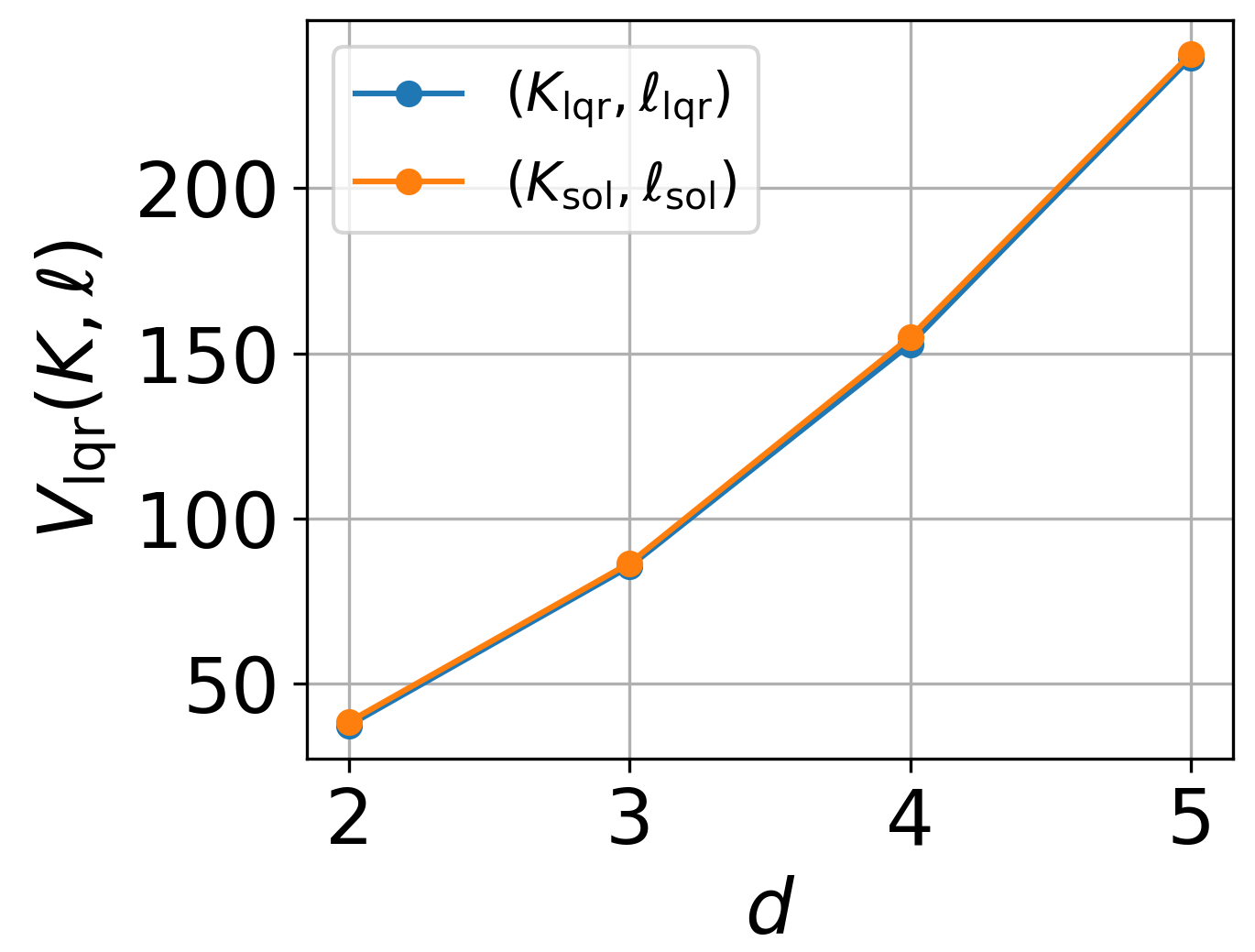}
    \caption{LQR values}
    \label{fig:simulation_HVAC:LQR}
  \end{subfigure}
  \caption{Performance of the solution controller against baselines for an HVAC grid system of size $d \times d$.}
  \label{fig:simulation_HVAC}
\end{figure}
  \section{Conclusions}

This paper extended risk-sensitive control synthesis in stationary LTI systems to affine controllers, accounting for nonzero-mean noise under the Worst-case CVaR (W-CVaR) risk measure. Unlike the zero-mean case where the optimal controller coincides with the LQR optimal one, we discovered that the linear and bias terms of the affine controller are strongly coupled, preventing decoupling of the optimization. By reformulating the problem as a Bilinear Matrix Inequality (BMI), we developed an alternating optimization algorithm that converges to a suboptimal solution. Simulations on Inverted Pendulum and HVAC systems demonstrated that our approach effectively produces risk-sensitive (suboptimal) controllers for systems under uncertainty, with increasing advantage over the naive affine LQR controller in terms of the W-CVaR measure as the dimension and complexity of the problem grow.

Although our algorithm relies solely on solving convex programs, they involve instances of semi-definite programming (SDPs), and thus may not scale efficiently to very large dimensions. Additionally, while our proposed solution is close to the W-CVaR-optimal one in certain cases, the suboptimality gap has not yet been analytically characterized. It is left as future work to address these limitations.

  \bibliographystyle{IEEEtran}
  \bibliography{ref}

\begin{thebibliography}{10}
\providecommand{\url}[1]{#1}
\csname url@samestyle\endcsname
\providecommand{\newblock}{\relax}
\providecommand{\bibinfo}[2]{#2}
\providecommand{\BIBentrySTDinterwordspacing}{\spaceskip=0pt\relax}
\providecommand{\BIBentryALTinterwordstretchfactor}{4}
\providecommand{\BIBentryALTinterwordspacing}{\spaceskip=\fontdimen2\font plus
\BIBentryALTinterwordstretchfactor\fontdimen3\font minus
  \fontdimen4\font\relax}
\providecommand{\BIBforeignlanguage}[2]{{%
\expandafter\ifx\csname l@#1\endcsname\relax
\typeout{** WARNING: IEEEtran.bst: No hyphenation pattern has been}%
\typeout{** loaded for the language `#1'. Using the pattern for}%
\typeout{** the default language instead.}%
\else
\language=\csname l@#1\endcsname
\fi
#2}}
\providecommand{\BIBdecl}{\relax}
\BIBdecl

\bibitem{duncan2013linear}
T.~E. Duncan, ``{L}inear-{E}xponential-{Q}uadratic {G}aussian control,''
  \emph{IEEE Transactions on Automatic Control}, vol.~58, no.~11, pp.
  2910--2911, 2013.

\bibitem{moon2020generalized}
J.~Moon, ``Generalized risk-sensitive optimal control and
  {H}amilton-{J}acobi-{B}ellman equation,'' \emph{IEEE Transactions on
  Automatic Control}, vol.~66, no.~5, pp. 2319--2325, 2020.

\bibitem{tsiamis2020risk}
A.~Tsiamis, D.~S. Kalogerias, L.~F. Chamon, A.~Ribeiro, and G.~J. Pappas,
  ``Risk-constrained linear-quadratic regulators,'' in \emph{2020 59th IEEE
  Conference on Decision and Control (CDC)}.\hskip 1em plus 0.5em minus
  0.4em\relax IEEE, 2020, pp. 3040--3047.

\bibitem{zhao2021infinite}
F.~Zhao, K.~You, and T.~Ba{\c{s}}ar, ``Infinite-horizon risk-constrained linear
  quadratic regulator with average cost,'' in \emph{2021 60th IEEE Conference
  on Decision and Control (CDC)}.\hskip 1em plus 0.5em minus 0.4em\relax IEEE,
  2021, pp. 390--395.

\bibitem{singh2018framework}
S.~Singh, Y.~Chow, A.~Majumdar, and M.~Pavone, ``A framework for
  time-consistent, risk-sensitive model predictive control: Theory and
  algorithms,'' \emph{IEEE Transactions on Automatic Control}, vol.~64, no.~7,
  pp. 2905--2912, 2018.

\bibitem{sopasakis2019risk}
P.~Sopasakis, D.~Herceg, A.~Bemporad, and P.~Patrinos, ``Risk-averse model
  predictive control,'' \emph{Automatica}, vol. 100, pp. 281--288, 2019.

\bibitem{wang2023near}
K.~Wang, N.~Kallus, and W.~Sun, ``Near-minimax-optimal risk-sensitive
  reinforcement learning with {CVaR},'' in \emph{International Conference on
  Machine Learning}.\hskip 1em plus 0.5em minus 0.4em\relax PMLR, 2023, pp.
  35\,864--35\,907.

\bibitem{zhang2023regularized}
R.~Zhang, Y.~Hu, and N.~Li, ``Regularized robust {MDP}s and risk-sensitive
  {MDP}s: Equivalence, policy gradient, and sample complexity,'' in \emph{The
  Twelfth International Conference on Learning Representations}.\hskip 1em plus
  0.5em minus 0.4em\relax PMLR, 2023.

\bibitem{delbaen2002coherent}
F.~Delbaen, ``Coherent risk measures on general probability spaces,''
  \emph{Advances in finance and stochastics: Essays in honour of Dieter
  Sondermann}, pp. 1--37, 2002.

\bibitem{follmer2002convex}
H.~F{\"o}llmer and A.~Schied, ``Convex measures of risk and trading
  constraints,'' \emph{Finance and stochastics}, vol.~6, pp. 429--447, 2002.

\bibitem{duffie1997overview}
D.~Duffie and J.~Pan, ``An overview of value at risk,'' \emph{Journal of
  derivatives}, vol.~4, no.~3, pp. 7--49, 1997.

\bibitem{rockafellar2000optimization}
R.~T. Rockafellar and S.~Uryasev, ``Optimization of conditional
  value-at-risk,'' \emph{Journal of Risk}, vol.~2, pp. 21--42, Mar. 2000.

\bibitem{follmer2011entropic}
H.~F{\"o}llmer and T.~Knispel, ``Entropic risk measures: Coherence vs.
  convexity, model ambiguity and robust large deviations,'' \emph{Stochastics
  and Dynamics}, vol.~11, no. 02n03, pp. 333--351, 2011.

\bibitem{zymler2013distributionally}
S.~Zymler, D.~Kuhn, and B.~Rustem, ``Distributionally robust joint chance
  constraints with second-order moment information,'' \emph{Mathematical
  Programming}, vol. 137, no. 1-2, pp. 167--198, Feb. 2013.

\bibitem{van2015distributionally}
B.~P. Van~Parys, D.~Kuhn, P.~J. Goulart, and M.~Morari, ``Distributionally
  robust control of constrained stochastic systems,'' \emph{IEEE Transactions
  on Automatic Control}, vol.~61, no.~2, pp. 430--442, 2015.

\bibitem{kishida2023risk}
M.~Kishida and A.~Cetinkaya, ``Risk-aware linear quadratic control using
  conditional value-at-risk,'' \emph{IEEE Transactions on Automatic Control},
  vol.~68, no.~1, pp. 416--423, Jan. 2023.

\bibitem{collins1987PhDthesis}
E.~G. Collins, ``State covariance assignment of discrete systems: Development
  and applications,'' {Ph.D.}, Purdue University, Indiana, USA, 1987.

\bibitem{chen2015optimal}
Y.~Chen, T.~T. Georgiou, and M.~Pavon, ``Optimal steering of a linear
  stochastic system to a final probability distribution ({P}arts {I}, {II} \&
  {III}),'' \emph{IEEE Transactions on Automatic Control}, vol.~61, no.~5, pp.
  1170--1180, 2015.

\bibitem{liu2022optimal}
F.~Liu, G.~Rapakoulias, and P.~Tsiotras, ``Optimal covariance steering for
  discrete-time linear stochastic systems,'' \emph{arXiv preprint
  arXiv:2211.00618}, 2022.

\bibitem{toyabe2010nonequilibrium}
S.~Toyabe, T.~Watanabe-Nakayama, T.~Okamoto, S.~Kudo, and E.~Muneyuki,
  ``Nonequilibrium energetics of a single {F1-ATP}ase molecule,''
  \emph{Biophysical Journal}, vol.~98, no.~3, pp. 167a--168a, 2010.

\bibitem{braiman2003control}
Y.~Braiman, J.~Barhen, and V.~Protopopescu, ``Control of friction at the
  nanoscale,'' \emph{Physical review letters}, vol.~90, no.~9, p. 094301, 2003.

\bibitem{zhang2023stochastic}
Y.~Zhang, M.~Cheng, B.~Nan, and S.~Li, ``Stochastic trajectory optimization for
  6-dof spacecraft autonomous rendezvous and docking with nonlinear chance
  constraints,'' \emph{Acta Astronautica}, vol. 208, pp. 62--73, 2023.

\bibitem{benedikter2022covariance}
B.~Benedikter, A.~Zavoli, Z.~Wang, S.~Pizzurro, and E.~Cavallini, ``Covariance
  control for stochastic low-thrust trajectory optimization,'' in \emph{AIAA
  scitech 2022 forum}, 2022, p. 2474.

\bibitem{hespanha2018linear}
J.~P. Hespanha, \emph{Linear Systems Theory}, 2nd~ed.\hskip 1em plus 0.5em
  minus 0.4em\relax Princeton University Press, 2018.

\bibitem{li2021distributed}
Y.~Li, Y.~Tang, R.~Zhang, and N.~Li, ``Distributed reinforcement learning for
  decentralized linear quadratic control: A derivative-free policy optimization
  approach,'' \emph{IEEE Transactions on Automatic Control}, vol.~67, no.~12,
  pp. 6429--6444, 2021.

\bibitem{cvx}
M.~Grant and S.~Boyd, ``{CVX}: Matlab software for disciplined convex
  programming, version 2.1,'' \url{https://cvxr.com/cvx}, Mar. 2014.

\end{thebibliography}


  \vspace{2em}
  \appendix
  \section{Proofs}\label{sec:A1-proofs}

\subsection{Characterizing the Limiting Moments}\label{sec:apdx-limiting_distribution_proof}

For the sake of notations, write $\mu_t \in \R^n$ and $\varSigma_t \in \S^n_{\succeq 0}$ for the mean and covariance of $x_t$, where $t \in \N$. We will use the shorthand notation $A_K := A + BK$ throughout the appendix. Note that $\rho(A_K) < 1$ since we restrict $K \in \K$.

\begin{proof}[Proof of \Cref{thm:limiting_distribution}]
Note that the closed-loop dynamics is
\begin{equation*}
  x_{t+1} = A_K x_t + B \ell + w_t,~ \forall t \in \N.
\end{equation*}
Take the telescoping sum, and we have
\begin{equation*}
  x_t = A_K^t x_0 + \sum_{\tau=0}^{t-1} A_K^{t-\tau-1} (B\ell + w_{\tau}),
\end{equation*}
which implies that $\mu_t$ and $\varSigma_t$ should be
\begin{align*}
  \mu_t &= A_K^t \mu_0 + \sum_{\tau=0}^{t-1} A_K^{t-\tau-1} (B\ell + \wmu) \nonumber\\
    &= A_K^t \mu_0 + \sum_{\tau=0}^{t-1} A_K^{\tau} (B\ell + \wmu),\\
  \varSigma_t &= A_K^t \varSigma_0 (A_K^t)^{\top} + \sum_{\tau=0}^{t-1} A_K^{t-\tau-1} \wSigma (A_K^{t-\tau-1})^{\top}\nonumber\\
    &= A_K^t \varSigma_0 (A_K^t)^{\top} + \sum_{\tau=0}^{t-1} A_K^{\tau} \wSigma (A_K^{\tau})^{\top}.
\end{align*}
The above equations are well-defined since $\rho(A_K) < 1$. Further, since $\norm{A_K^t \mu_0} \leq \norm{A_K}^t \norm{\mu_0} \to 0$, the sequence of means eventually converge to
\begin{equation}\label{eq:limiting_mean}
  \bmu := \lim_{t \to \infty} \mu_t = \sum_{\tau = 0}^{\infty} A_K^{\tau} (B\ell + \wmu).
\end{equation}
Similarly, since $\norm*{A_K^t \varSigma_0 (A_K^t)^{\top}} \leq \norm{\varSigma_0} \norm{A_K^t}^2 \to 0$, the sequence of covariances eventually converge to
\begin{equation}\label{eq:limiting_covariance}
  \bSigma := \lim_{t \to \infty} \varSigma_t = \sum_{\tau=0}^{\infty} A_K^{\tau} \wSigma (A_K^{\tau})^{\top}.
\end{equation}
Therefore, the first and second moments in the limit are uniquely determined by $(\bmu, \bSigma)$. It can then be directly verified that \eqref{eq:limiting_mean} and \eqref{eq:limiting_covariance} are exactly the solutions to \eqref{eq:stationary_distribution}.
\end{proof}

\begin{remark}
  Note that \eqref{eq:stationary_distribution} describes exactly the \textit{invariant moments} under the closed-loop dynamics, which coincides with the (unique) \textit{limiting moments} regardless of $(\mu_0, \varSigma_0)$.
\end{remark}

We end this section by taking a closer look at \eqref{eq:stationary_distribution:Sigma}, which is a \textit{discrete-time Lyapunov equation} that is closely related to the stability of the system through the following lemma.

\begin{lemma}[\cite{hespanha2018linear}]\label{thm:lyapunov_equation}
  A matrix $A$ is Schur stable (i.e., $\rho(A) < 1$) if and only if for any $Q \succ 0$, there exists a unique solution $P$ to the discrete-time Lyapunov equation
  \begin{equation*}
    A^{\top} P A - P = -Q.
  \end{equation*}
  Further, the solution is positive-definite, and can be explicitly expressed in series form as
  \(
    P = \sum_{\tau=0}^{\infty} (A^{\tau})^{\top} Q A^{\tau}
  \).
\end{lemma}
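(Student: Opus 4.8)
The plan is to prove the two directions of the equivalence separately, letting the explicit series formula carry the forward direction; this is the classical discrete-time Lyapunov stability theorem.

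\emph{Sufficiency.} Assume $\rho(A) < 1$. First I would show the series $P := \sum_{\tau=0}^{\infty} (A^\tau)^\top Q A^\tau$ converges absolutely. By Gelfand's formula, $\rho(A) < 1$ yields constants $C > 0$ and $\lambda \in (\rho(A), 1)$ with $\norm{A^\tau} \le C \lambda^\tau$ for all $\tau \in \N$; hence $\norm{(A^\tau)^\top Q A^\tau} \le C^2 \norm{Q}\, \lambda^{2\tau}$, so the partial sums form a Cauchy sequence in $\S^n$ and $P$ is well defined. A re-indexing then verifies the equation: $A^\top P A = \sum_{\tau=0}^{\infty} (A^{\tau+1})^\top Q A^{\tau+1} = \sum_{\tau=1}^{\infty} (A^\tau)^\top Q A^\tau = P - Q$, i.e.\ $A^\top P A - P = -Q$. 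Positive-definiteness is immediate, since every summand is positive semidefinite and the $\tau = 0$ term is $Q \succ 0$, so $P \succeq Q \succ 0$.

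\emph{Uniqueness.} Still under $\rho(A) < 1$: if $P_1$ and $P_2$ both solve the equation, their difference $D := P_1 - P_2$ satisfies $A^\top D A = D$, hence $D = (A^\tau)^\top D A^\tau$ for every $\tau$; bounding the right-hand side by $C^2 \norm{D}\, \lambda^{2\tau} \to 0$ forces $D = 0$.

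\emph{Necessity.} Suppose $P \succ 0$ solves $A^\top P A - P = -Q$ for some $Q \succ 0$, and let $\lambda \in \mathbb{C}$ be an eigenvalue of $A$ with eigenvector $v \neq 0$. Multiplying the equation on the left by $v^*$ and on the right by $v$, and using $(Av)^* P (Av) = |\lambda|^2\, v^* P v$ (which follows from $Av = \lambda v$ and $A$ being real), gives $(|\lambda|^2 - 1)\, v^* P v = -\, v^* Q v$. Since $v^* P v > 0$ and $v^* Q v > 0$, this forces $|\lambda| < 1$; as the eigenvalue was arbitrary, $\rho(A) < 1$. I do not expect a genuine obstacle, as the statement is standard, but the step deserving the most care is the summability of the matrix series, which rests on converting the spectral-radius condition into uniform geometric decay of $\norm{A^\tau}$ via Gelfand's formula; the remaining arguments are purely algebraic.
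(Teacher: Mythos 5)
The paper does not prove this lemma at all --- it is stated as a known textbook fact and attributed to \cite{hespanha2018linear}, so there is no in-paper argument to compare against. Your proof is the standard one and is correct: Gelfand's formula gives the geometric bound $\norm{A^{\tau}} \leq C\lambda^{\tau}$ needed for absolute convergence of the series, the re-indexing verifies the Lyapunov equation, the telescoping/iteration argument $D = (A^{\tau})^{\top} D A^{\tau} \to 0$ gives uniqueness, and the eigenvector computation gives the converse. One point worth making explicit: in the necessity direction you assume $P \succ 0$, whereas the lemma as literally phrased only hypothesizes existence and uniqueness of \emph{a} solution for every $Q \succ 0$. Mere unique solvability of the linear equation $A^{\top}PA - P = -Q$ is equivalent to $\lambda_i\lambda_j \neq 1$ for all pairs of eigenvalues of $A$ and does \emph{not} imply Schur stability (e.g., the scalar $A = 2$ admits the unique but negative solution $P = -Q/3$); the positive-definiteness of $P$ is essential to your eigenvector argument, since without $v^{*}Pv > 0$ the sign of $|\lambda|^2 - 1$ cannot be pinned down. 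This is an imprecision inherited from the lemma's phrasing rather than a flaw in your reasoning --- the intended reading (and the statement in the cited source) includes $P \succ 0$ as part of the equivalence --- but you should state that you are proving the version in which positive-definiteness of the solution is part of the hypothesis of the converse.
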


\subsection{Technical Lemmas Regarding W-CVaR Measure}\label{sec:apdx-WCVaR_lemma}

For the paper to be self-contained, we include the technical lemmas we need regarding the W-CVaR measure. The following lemma provides an equivalent characterization of W-CVaR.

\begin{lemma}[\cite{zymler2013distributionally}]
  For any random variable $L: \varXi \to \R$ defined on sample space $\varXi$, and any family of distributions $\cP \subseteq \Delta(\varXi)$, the W-CVaR with respect to $\cP$ is given by
  \begin{align}\label{eq:WCVaR_characterize}
    &\cP\bcvar_{\beta}[L(\xi)] ={} \nonumber\\
    &\quad \min_{\alpha \in \R} \brac[\Big]{\alpha + \tfrac{1}{1-\beta} \sup_{\P \in \cP} \E[\xi \sim \P]{(L(\xi)-\alpha)^+}}.
  \end{align}
\end{lemma}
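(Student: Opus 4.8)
The plan is to read the claimed identity as an interchange of the supremum over $\cP$ and the infimum over $\alpha$. Abbreviate $f(\alpha,\P) := \alpha + \tfrac{1}{1-\beta}\,\mathbb{E}_{\xi\sim\P}\bigl[(L(\xi)-\alpha)^+\bigr]$. By the definition of W-CVaR together with the Rockafellar--Uryasev representation of $\P\bcvar_\beta$ (the first displayed formula in the CVaR definition), $\cP\bcvar_\beta[L(\xi)] = \sup_{\P\in\cP}\inf_{\alpha\in\R} f(\alpha,\P)$, so the assertion is exactly $\sup_{\P}\inf_{\alpha} f = \inf_{\alpha}\sup_{\P} f$, together with attainment of the outer infimum. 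The inequality ``W-CVaR $\le \min_\alpha \sup_\P f$'' is the generic max--min bound and needs no hypotheses; all the work lies in the reverse inequality.

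First I would record the structural facts that enable a minimax argument: for each fixed $\P$, $\alpha\mapsto f(\alpha,\P)$ is convex (an average of the convex maps $\alpha\mapsto (L(\xi)-\alpha)^+$), and for each fixed $\alpha$, $\P\mapsto f(\alpha,\P)$ is affine in the measure; moreover the family $\cP$ (here $\cP_{\mu,\varSigma}$) is convex. To bring in Sion's minimax theorem I would restrict $\alpha$ to a fixed compact interval $I = [\nu_-,\nu_+]$ chosen so that it captures every infimum in sight. On the inner side, $f(\cdot,\P)$ is strictly decreasing below the essential infimum of $L$ under $\P$ and nondecreasing above the $\beta$-value-at-risk of $L$ under $\P$, so its minimum over $\R$ is already attained in $I$ provided $\nu_- \le \inf L$ (finite, since the cost $c_{K,\ell}$ is a convex and hence bounded-below quadratic) and $\nu_+ \ge \sup_{\P\in\cP}\mathrm{VaR}_\beta^{\P}(L)$; the latter supremum is finite because $\mathrm{VaR}_\beta^{\P}(L)\le\mathrm{CVaR}_\beta^{\P}(L)\le f(0,\P) = \tfrac{1}{1-\beta}\mathbb{E}_{\xi\sim\P}[L(\xi)^+]$, and for a quadratic $L$ the right-hand side is determined by the prescribed first two moments, hence constant over $\cP_{\mu,\varSigma}$. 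On the outer side, $\Phi(\alpha) := \sup_{\P\in\cP} f(\alpha,\P)$ is convex, and the same monotonicity together with $\Phi(\alpha)\ge\alpha$ for large $\alpha$ and $\Phi(\alpha)\to\infty$ as $\alpha\to-\infty$ (using $\sup_{\P}\mathbb{E}_{\xi\sim\P}[L(\xi)]<\infty$) shows $\Phi$ is coercive with its minimizer in $I$.

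With $I$ fixed, I would apply Sion's minimax theorem on $\cP\times I$: $f$ is quasiconcave and upper semicontinuous (indeed affine) in $\P$, continuous and convex in $\alpha$, and $I$ is compact, so $\sup_{\P}\inf_{\alpha\in I} f = \inf_{\alpha\in I}\sup_{\P} f$. Removing the restriction to $I$ on both sides is legitimate by the two reductions above, and pulling $\alpha$ out of the $\P$-supremum (it does not depend on $\P$) rewrites the right-hand side as $\inf_{\alpha\in\R}\bigl\{\alpha + \tfrac{1}{1-\beta}\sup_{\P\in\cP}\mathbb{E}_{\xi\sim\P}[(L(\xi)-\alpha)^+]\bigr\}$; coercivity of $\Phi$ upgrades this infimum to a minimum, which is the claim. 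As an alternative to compactifying $\alpha$, one could instead exploit that $\cP_{\mu,\varSigma}$ is weak-$*$ compact and apply the minimax theorem on $\cP_{\mu,\varSigma}\times\R$, but the compact-$\alpha$ route keeps the topological overhead minimal.

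The main obstacle is not the minimax step itself but the uniform bookkeeping required to replace $\R$ by a common compact $\alpha$-interval in both the inner and the outer problems; this hinges on a uniform lower bound for $L$ and on $\sup_{\P\in\cP}\mathbb{E}_{\xi\sim\P}[L(\xi)^+]<\infty$ (equivalently, finiteness of the W-CVaR), both of which are automatic for the bounded-below quadratic cost and the moment-constrained family used in this paper but would require an explicit moment/integrability assumption in full generality. Throughout I would keep as a standing hypothesis that $(L(\xi)-\alpha)^+$ is measurable and integrable under every $\P\in\cP$, which is what makes $f$ well defined.
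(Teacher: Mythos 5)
The paper gives no proof of this lemma --- it is imported verbatim from \cite{zymler2013distributionally} --- so there is no in-paper argument to compare against. Your reconstruction is, in substance, the standard one from that reference: read the identity as an interchange of $\sup_{\P}$ and $\inf_{\alpha}$ in the Rockafellar--Uryasev representation, and justify it by a minimax theorem exploiting convexity in $\alpha$ and affinity in $\P$ (Zymler et al.\ invoke a stochastic saddle point theorem of Shapiro--Kleywegt where you invoke Sion after compactifying $\alpha$; the content is the same). Your plan is correct, and you rightly identify that the only real work is the uniform localization of the $\alpha$-minimizers. Three points deserve emphasis. First, the lemma as literally stated (``any random variable $L$,'' ``any family $\cP$'') is too strong: $\P \mapsto \P\bcvar_{\beta}[L(\xi)]$ is an infimum of $\P$-affine functions and hence concave in $\P$, so for a non-convex $\cP$ the left-hand side can be strictly smaller than the right-hand side (which is insensitive to convexification because $\mathbb{E}_{\xi\sim\P}[(L(\xi)-\alpha)^+]$ is affine in $\P$); your standing hypotheses of convexity and integrability are therefore not optional, and you are right that they hold for $\cP_{\mu,\varSigma}$ with the quadratic cost. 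Second, a small slip: $\mathbb{E}_{\xi\sim\P}[L(\xi)^+]$ is \emph{not} in general determined by the first two moments; it happens to be here because $c_{K,\ell}(\bx) = \bx^{\top}Q\bx + \bu^{\top}R\bu \geq 0$ (so $L^+ = L$), and in general you should instead use the moment-determined bound $\mathbb{E}[L^+] \leq \mathbb{E}[L] + |\inf L|$. Third, when applying Sion you need $\P \mapsto f(\alpha,\P)$ to be upper semicontinuous on $\cP$; the weak topology does not give this for unbounded integrands, so you should say explicitly that you equip the span of $\cP$ with the initial topology generated by the finitely many (or a dense family of) integrands $\P \mapsto \int (L-\alpha)^+\,d\P$, under which these maps are continuous and Sion applies. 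None of these is a gap in the idea, only in the bookkeeping.
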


In observation of the specific form of \eqref{eq:WCVaR_characterize}, the following lemma comes in handy when we want to eliminate the expectation from the optimization formulation of W-CVaR for the specific family $\cP_{\mu,\varSigma}$ (as defined in \Cref{sec:settings-WCVaR})

\begin{lemma}[\cite{zymler2013distributionally}]
  For any random variable $L: \varXi \to \R$ defined on sample space $\varXi$, with respect to the family $\cP_{\mu,\varSigma}$ of distributions with prescribed mean $\mu$ and covariance $\varSigma$,
  \begin{align}\label{eq:worst_case_moment_dual}
    &\sup_{\P \in \cP_{\mu,\varSigma}} \E[\P \sim \cP_{\mu,\varSigma}]{(L(\xi))^+} ={}\\
    &\quad \min_{M \in \mathcal{S}^{n+1}_{\succeq 0}} \brac*{\angl{\varOmega, M} \;\middle|\; \begin{bsmallmatrix}
      \xi \\ 1
    \end{bsmallmatrix}^{\top} M \begin{bsmallmatrix}
      \xi \\ 1
    \end{bsmallmatrix} \geq L(\xi),~ \forall \xi \in \R^n}, \nonumber
  \end{align}
  where
  \(
    \varOmega := \begin{bsmallmatrix}
      \varSigma + \mu \mu^{\top} & \mu \\
      \mu^{\top} & 1
    \end{bsmallmatrix}
  \) denotes the second-order moment matrix of any distribution in $\cP_{\mu,\varSigma}$.
\end{lemma}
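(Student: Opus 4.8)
The plan is to read \eqref{eq:worst_case_moment_dual} as a strong-duality statement for a generalized moment problem. Since the bottom-right entry of the second-moment constraint pins the total mass of a measure to $1$, the family $\cP_{\mu,\varSigma}$ is exactly the feasible set of the infinite-dimensional linear program over nonnegative Borel measures $\nu$ on $\R^n$,
\begin{equation*}
  (\mathrm{P}):\quad \sup_{\nu \geq 0}\ \int_{\R^n} (L(\xi))^+ \,\nu(d\xi)\quad \text{s.t.}\quad \int_{\R^n} \begin{bsmallmatrix}\xi\\1\end{bsmallmatrix}\begin{bsmallmatrix}\xi\\1\end{bsmallmatrix}^{\top}\nu(d\xi) = \varOmega,
\end{equation*}
and, assigning a symmetric multiplier $M \in \S^{n+1}$ to the matrix equality, its conic Lagrangian dual is
\begin{equation*}
  (\mathrm{D}):\quad \inf_{M \in \S^{n+1}} \angl{\varOmega, M}\quad \text{s.t.}\quad \begin{bsmallmatrix}\xi\\1\end{bsmallmatrix}^{\top} M \begin{bsmallmatrix}\xi\\1\end{bsmallmatrix} \geq (L(\xi))^+\ \ \forall \xi \in \R^n.
\end{equation*}
I would then establish three things: that the feasible set of $(\mathrm{D})$ is precisely the set appearing in \eqref{eq:worst_case_moment_dual}; that weak duality gives ``$\leq$''; and that there is no duality gap with the dual optimum attained, which upgrades this to ``$=$'' and turns the $\inf$ into a $\min$.

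For the feasible-set identification, recall that a quadratic $\xi \mapsto \begin{bsmallmatrix}\xi\\1\end{bsmallmatrix}^{\top} M \begin{bsmallmatrix}\xi\\1\end{bsmallmatrix}$ is nonnegative on all of $\R^n$ if and only if $M \succeq 0$ (the nontrivial direction is completing the square on the leading $n\times n$ block, i.e.\ a generalized Schur-complement argument). Since $(L(\xi))^+ \geq 0$, any $M$ feasible for $(\mathrm{D})$ is automatically positive semidefinite, and then $\begin{bsmallmatrix}\xi\\1\end{bsmallmatrix}^{\top} M \begin{bsmallmatrix}\xi\\1\end{bsmallmatrix} \geq (L(\xi))^+$ is equivalent to the conjunction of $M \succeq 0$ and $\begin{bsmallmatrix}\xi\\1\end{bsmallmatrix}^{\top} M \begin{bsmallmatrix}\xi\\1\end{bsmallmatrix} \geq L(\xi)$ for all $\xi$ --- exactly the constraints in \eqref{eq:worst_case_moment_dual}. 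Weak duality is then immediate: for such an $M$ and any $\P \in \cP_{\mu,\varSigma}$, taking $\P$-expectations of the pointwise bound $\begin{bsmallmatrix}\xi\\1\end{bsmallmatrix}^{\top} M \begin{bsmallmatrix}\xi\\1\end{bsmallmatrix} \geq (L(\xi))^+$ and using $\E[\P]{\begin{bsmallmatrix}\xi\\1\end{bsmallmatrix}\begin{bsmallmatrix}\xi\\1\end{bsmallmatrix}^{\top}} = \varOmega$ gives $\E[\P]{(L(\xi))^+} \leq \angl{\varOmega, M}$; taking the supremum over $\P$ and the infimum over $M$ yields ``$\leq$'' in \eqref{eq:worst_case_moment_dual}.

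The remaining step --- which I expect to be the crux --- is the absence of a duality gap together with attainment in $(\mathrm{D})$. I would obtain this from a conic (semi-infinite) linear-programming duality theorem for moment problems (the Isii--Shapiro moment-duality framework, or the convex-optimization treatment of optimal probability inequalities in the style of Bertsimas--Popescu). Its hypothesis is a Slater-type interior condition on the primal $(\mathrm{P})$: that $\varOmega$ lie in the interior of the moment cone, i.e.\ the closure of $\{ \int \begin{bsmallmatrix}\xi\\1\end{bsmallmatrix}\begin{bsmallmatrix}\xi\\1\end{bsmallmatrix}^{\top}\nu(d\xi) : \nu \geq 0 \}$. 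This holds precisely when $\varOmega \succ 0$, equivalently $\varSigma \succ 0$, with an interior feasible point witnessed by the Gaussian $\mathcal{N}(\mu,\varSigma)$; under it the gap vanishes and the infimum in $(\mathrm{D})$ is attained, so $(\mathrm{D})$ is a genuine $\min$ equal to the left-hand side. (If $L$ grows faster than quadratically both sides are $+\infty$ and the identity is trivial.) Finally, the degenerate case of singular $\varSigma \succeq 0$ is recovered either by restricting $\xi$ to the affine hull of the support --- on which the effective covariance is nonsingular --- or by replacing $\varSigma$ with $\varSigma + \varepsilon I$ and letting $\varepsilon \downarrow 0$, using continuity of both sides of \eqref{eq:worst_case_moment_dual} in $\varSigma$; this reproduces the statement in the generality cited from \cite{zymler2013distributionally}.
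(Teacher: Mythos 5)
The paper offers no proof of this lemma---it is quoted verbatim from \cite{zymler2013distributionally}---and your conic-duality argument (primal moment LP over measures, dual over multipliers $M$, identification of dual feasibility with $M \succeq 0$ plus the pointwise bound, weak duality, then Isii/Shapiro strong duality under a Slater condition) is exactly the standard proof given in that reference, so your approach matches the source. The one caveat worth noting is that the cited result genuinely requires $\varSigma \succ 0$: when $\varSigma$ is singular your $\varepsilon$-perturbation argument recovers equality of the optimal \emph{values}, but the dual minimum need not be attained (e.g.\ $n=1$, $\mu=0$, $\varSigma=0$, $L(\xi)=2\xi$ gives infimum $0$ with no feasible minimizer), so the ``$\min$'' in the statement should really be an ``$\inf$'' in that degenerate case---a gap in the lemma as transcribed by the paper rather than in your argument, which correctly flags the issue.
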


In the special case where $\varXi = \R^n$ and $L(\xi) = \xi^{\top} P \xi + q^{\top} \xi + r$ is quadratic in $\xi$, \eqref{eq:worst_case_moment_dual} can be substituted into \eqref{eq:WCVaR_characterize} to obtain the following characterization of W-CVaR with a quadratic objective, which is the main technical tool we use.

\begin{lemma}[characterization of worst-case CVaR with a quadratic objective] \label{thm:quadratic_worst_CVaR_characterization}
  For $L(\xi) = \xi^{\top} P \xi + q^{\top} \xi + r$, we have
  \begin{align*}
    &\cP_{\mu,\varSigma}\bcvar_{\beta}[L(\xi)] ={}\\
    &\quad \min_{\begin{subarray}{c} \alpha \in \R, M \in \mathcal{S}^{n+1}_{\succeq 0}  \end{subarray}} \brac*{\alpha + \tfrac{1}{1-\beta} \angl{\varOmega, M} \;\middle|\; M \succeq \begin{bsmallmatrix}
      P & \frac{1}{2} q \\
      \frac{1}{2} q^{\top} & r - \alpha
    \end{bsmallmatrix}}. \nonumber
  \end{align*}
\end{lemma}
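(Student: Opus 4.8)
The plan is to chain the two previously stated W-CVaR lemmas and specialize them to the quadratic loss, so that the "sup over $\P$" collapses into a semidefinite program. First I would invoke \eqref{eq:WCVaR_characterize} to write
$\cP_{\mu,\varSigma}\bcvar_{\beta}[L(\xi)] = \min_{\alpha \in \R}\,\{\alpha + \tfrac{1}{1-\beta}\sup_{\P \in \cP_{\mu,\varSigma}} \E[\xi\sim\P]{(L(\xi)-\alpha)^+}\}$.
For a fixed $\alpha$, the inner supremum has exactly the shape treated by \eqref{eq:worst_case_moment_dual}, applied to the random variable $L(\xi)-\alpha$; since $L$ is quadratic, so is $L(\xi)-\alpha = \xi^{\top} P \xi + q^{\top}\xi + (r-\alpha)$. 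Substituting \eqref{eq:worst_case_moment_dual} then gives
$\sup_{\P}\E{(L(\xi)-\alpha)^+} = \min_{M \in \S^{n+1}_{\succeq 0}}\{\angl{\varOmega,M} : \begin{bsmallmatrix}\xi\\1\end{bsmallmatrix}^{\top} M \begin{bsmallmatrix}\xi\\1\end{bsmallmatrix} \geq L(\xi)-\alpha,\ \forall \xi \in \R^n\}$,
with $\varOmega = \begin{bsmallmatrix}\varSigma + \mu\mu^{\top} & \mu \\ \mu^{\top} & 1\end{bsmallmatrix}$.

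Second, I would rewrite the semi-infinite constraint as a single LMI. Writing $N_\alpha := \begin{bsmallmatrix}P & \tfrac12 q \\ \tfrac12 q^{\top} & r-\alpha\end{bsmallmatrix}$, one has the identity $L(\xi)-\alpha = \begin{bsmallmatrix}\xi\\1\end{bsmallmatrix}^{\top} N_\alpha \begin{bsmallmatrix}\xi\\1\end{bsmallmatrix}$, so the constraint reads $\begin{bsmallmatrix}\xi\\1\end{bsmallmatrix}^{\top}(M - N_\alpha)\begin{bsmallmatrix}\xi\\1\end{bsmallmatrix} \geq 0$ for all $\xi \in \R^n$. The step here is to argue this is equivalent to $M - N_\alpha \succeq 0$: the "$\Leftarrow$" direction is immediate, and for "$\Rightarrow$" any $y = (z,t) \in \R^{n+1}$ with $t \neq 0$ satisfies $y^{\top}(M-N_\alpha)y = t^2\begin{bsmallmatrix}z/t\\1\end{bsmallmatrix}^{\top}(M-N_\alpha)\begin{bsmallmatrix}z/t\\1\end{bsmallmatrix} \geq 0$, while the vectors with $t = 0$ are limits of such $y$, so nonnegativity extends to all of $\R^{n+1}$ by continuity; hence $M \succeq N_\alpha$. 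This turns the inner problem into $\min_{M \in \S^{n+1}_{\succeq 0}}\{\angl{\varOmega,M} : M \succeq N_\alpha\}$.

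Finally, I would merge the outer $\min_{\alpha}$ with the inner $\min_{M}$ into one joint minimization over $(\alpha, M)$: the objective $\alpha + \tfrac{1}{1-\beta}\angl{\varOmega,M}$ is the same in both formulations and the feasible region decomposes as "$\alpha \in \R$, then $M \in \S^{n+1}_{\succeq 0}$ with $M \succeq N_\alpha$", so the two-stage minimum equals the joint minimum; attainment is inherited from the two lemmas. This yields precisely the claimed characterization.

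The only non-mechanical step is the reduction of the semi-infinite quadratic inequality to the LMI $M \succeq N_\alpha$: the lifted vectors $\begin{bsmallmatrix}\xi\\1\end{bsmallmatrix}$ only trace out an affine slice of $\R^{n+1}$, so positivity there is a priori weaker than $M - N_\alpha \succeq 0$, and closing this gap requires the homogenization-and-continuity argument above. Everything else is substitution and bookkeeping.
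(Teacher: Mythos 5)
Your proposal is correct and follows essentially the same route as the paper, which simply notes that the lemma is obtained by substituting \eqref{eq:worst_case_moment_dual} into \eqref{eq:WCVaR_characterize} for the quadratic $L$. The one step you spell out that the paper leaves implicit—reducing the semi-infinite constraint $\begin{bsmallmatrix}\xi\\1\end{bsmallmatrix}^{\top}(M-N_\alpha)\begin{bsmallmatrix}\xi\\1\end{bsmallmatrix}\geq 0$ to the LMI $M\succeq N_\alpha$ via homogenization and continuity—is handled correctly.
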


\subsection{Detailed Settings of the Simulation}\label{sec:apdx-simulation_setting}

\begin{wrapfigure}{r}{1.5cm}
    \centering
    \vspace{-20pt}
    \begin{tikzpicture}
      \filldraw[black] (0pt, 0pt) circle (1pt);
      \draw[black, line width=0.5pt] (-20:3pt) arc (-20:-200:3pt) -- ([shift=(160:3pt)]70:48pt) -- ([shift=(-20:3pt)]70:48pt) -- (-20:3pt);
      \draw[black, fill=white, line width=1.0pt] (70:56pt) circle (8pt);
      \draw[black, line width=0.5pt, dashed] (0pt,-10pt) -- (0pt, 50pt);
      \node at (70:56pt) {\small $r$};
      \node[left=6pt] at (70:56pt) {\small $G$};
      \node at ([shift=(-20:8pt)]70:24pt) {\small $L$};

      \draw[black, line width=0.5pt, ->, >={Stealth}] (120:10pt) arc (120:390:10pt);
      \node at (12pt, -12pt) {\small $J$};
    \end{tikzpicture}
    \vspace{-20pt}
\end{wrapfigure}
\textbf{Inverted Pendulum.} We linearize the well-known Inverted Pendulum dynamics around the upright position \cite{hespanha2018linear}, i.e.,
\begin{equation*}
  x_{t+1} = \underbrace{\begin{bmatrix}
    0 & 1 \\
    -\frac{GL}{J} & -\frac{r}{J}
  \end{bmatrix}}_{A} x_t + \underbrace{\begin{bmatrix}
    0 \\ -\frac{L}{J}
  \end{bmatrix}}_{B} u_t + w_t.
\end{equation*}
Here $G$ is the weight of the pendulum, $L$ is the length of the pole, $r$ is the radius of the pendulum, and $J$ is its moment of inertia. For simplicity, we fix $G = J = r = 1$, and leave $L$ as a variable parameter. Finally, set $Q = \diag(2,1)$ and $R = \begin{bmatrix} 2 \end{bmatrix}$.

\boldtitle{HVAC system.} The system dynamics can be described as

\begin{small}
\begin{align*}
  x_{t+1} = \underbrace{ \begin{bsmallmatrix}
    1 - \frac{n\delta}{\upsilon \zeta_{11}} & \frac{\delta}{\upsilon \zeta_{12}} & \cdots & \frac{\delta}{\upsilon \zeta_{1n}} \\
    \frac{\delta}{\upsilon \zeta_{21}} & 1 - \frac{n\delta}{\upsilon \zeta_{22}} & \sddots & \svdots \\
    \svdots & \sddots & \sddots & \frac{\delta}{\upsilon \zeta_{n-1,n}} \\
    \frac{\delta}{\upsilon \zeta_{n1}} & \cdots & \frac{\delta}{\upsilon \zeta_{n,n-1}} & 1 - \frac{n\delta}{\upsilon \zeta_{nn}} \\
  \end{bsmallmatrix} }_{A} x_t + \underbrace{ \tfrac{\delta}{\upsilon} I }_{B} u_t + w_t,
\end{align*}
\end{small}

\noindent where $w_t := \frac{\sqrt{\delta}}{\upsilon} d_t + \frac{\delta}{\upsilon} \pi + \frac{\delta \theta^{\circ}}{\upsilon \zeta}$.  Detailed explanations of the dynamics can be found in Section 6.1 of \cite{li2021distributed}. In our implementation, the parameters are selected similar to those in \cite{li2021distributed}: $\delta = \qty{60}{\second}$, $\theta^{\circ} = \qty{30}{\celsius}$, $\pi = \qty{1}{\kilo\watt}$, $\upsilon = \qty{400}{\kilo\joule\per\celsius}$, $\zeta = \qty{1}{\celsius\per\kilo\watt}$, $\alpha = 0.01$. Suppose the disturbance $d_t$ admits mean $0$ and covariance $I$, so that the overall process noise $w_t$ has nonzero mean $\wmu = \frac{\delta}{\upsilon} \pi_i + \frac{\delta \theta^{\circ}}{\upsilon \zeta}$ and covariance $\wSigma = \frac{\delta}{\upsilon^2} I$.

\subsection{Characterizing the LQR Value}\label{sec:appdx-LQR_value}

The following lemma characterizes the expected average LQR value $V_{\lqr}(K, \ell)$ of an affine controller $u = Kx + \ell$.

\begin{lemma}
  Given an affine controller $u = Kx + \ell$, we have
  \begin{alignfit}[0.95\linewidth]
    V_{\lqr}(K,\ell)
    :={}& \mathop{\lim\sup}_{T \to \infty} \frac{1}{T} \sum_{t=1}^{T} \E{x_t^{\top} Q x_t + u^{\top} R u_t} \\
    ={}& \tr\prn[\big]{ P_K (\wSigma + \nu\nu^{\top} ) } + 2 \ell^{\top} (RK + B^{\top} P_K A_K) \bmu \\
       &\quad + 2\wmu^{\top} P_K A_K \bmu + \ell^{\top} R \ell.
  \end{alignfit}
  Here $P_K$ is the solution to the discrete-time Lyapunov equation $A_K^{\top} P_K A_K + Q_K = P_K$, where we define shorthand notations $Q_K := Q + K^{\top} R K$ and $\nu := B\ell + \wmu$.
\end{lemma}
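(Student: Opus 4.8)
The plan is to compute the time-averaged cost by expanding the quadratic form and taking limits, using the limiting moments from \Cref{thm:limiting_distribution}. First I would write $x_t^{\top} Q x_t + u_t^{\top} R u_t = x_t^{\top}(Q + K^{\top} R K) x_t + 2\ell^{\top} R K x_t + \ell^{\top} R \ell = x_t^{\top} Q_K x_t + 2\ell^{\top} R K x_t + \ell^{\top} R \ell$, so that $\E{x_t^{\top} Q x_t + u_t^{\top} R u_t} = \angl{Q_K, \varSigma_t + \mu_t \mu_t^{\top}} + 2\ell^{\top} R K \mu_t + \ell^{\top} R \ell$. Since $\rho(A_K) < 1$, the proof of \Cref{thm:limiting_distribution} gives $\mu_t \to \bmu$ and $\varSigma_t \to \bSigma$ geometrically, so the Cesàro average converges to the same limit, yielding $V_{\lqr}(K,\ell) = \angl{Q_K, \bSigma + \bmu\bmu^{\top}} + 2\ell^{\top} R K \bmu + \ell^{\top} R \ell$.

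The remaining work is to massage $\angl{Q_K, \bSigma + \bmu\bmu^{\top}}$ into the claimed form using the Lyapunov structure. I would introduce $P_K$ solving $A_K^{\top} P_K A_K + Q_K = P_K$ (which exists and is positive definite by \Cref{thm:lyapunov_equation} since $Q_K \succ 0$), and use the stationarity relation \eqref{eq:stationary_distribution:Sigma}, namely $A_K \bSigma A_K^{\top} + \wSigma = \bSigma$. Taking the inner product with $P_K$ and using $\angl{A_K \bSigma A_K^{\top}, P_K} = \angl{\bSigma, A_K^{\top} P_K A_K} = \angl{\bSigma, P_K - Q_K}$ gives the telescoping identity $\angl{\bSigma, Q_K} = \angl{\wSigma, P_K}$. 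For the mean part, I would use $\bmu = A_K \bmu + B\ell + \wmu = A_K \bmu + \nu$ from \eqref{eq:stationary_distribution}; substituting this into $\bmu^{\top} Q_K \bmu = \bmu^{\top}(P_K - A_K^{\top} P_K A_K)\bmu$ and expanding $A_K \bmu = \bmu - \nu$ should collapse the quadratic-in-$\bmu$ terms, leaving $\bmu^{\top} Q_K \bmu = 2\nu^{\top} P_K \bmu - \nu^{\top} P_K \nu + (\text{cross terms})$; combined with the $2\ell^{\top} RK\bmu$ term and the expansion $\nu = B\ell + \wmu$, this should reorganize into $\tr(P_K(\wSigma + \nu\nu^{\top})) + 2\ell^{\top}(RK + B^{\top} P_K A_K)\bmu + 2\wmu^{\top} P_K A_K \bmu + \ell^{\top} R\ell$, matching the statement.

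The main obstacle I anticipate is the bookkeeping in the mean-dependent terms: the claimed expression splits the linear-in-$\bmu$ contribution asymmetrically into an $\ell$-weighted piece carrying $RK + B^{\top} P_K A_K$ and a separate $\wmu^{\top} P_K A_K$ piece, so I will need to be careful about how $\nu^{\top} P_K \bmu$ is decomposed via $\nu = B\ell + \wmu$ and how the residual $\nu^{\top} P_K \nu$ term is absorbed — it presumably combines with part of $\angl{\bmu\bmu^{\top}, Q_K}$ to produce the $\tr(P_K \nu\nu^{\top})$ term sitting next to $\tr(P_K \wSigma)$. A clean way to organize this is to first establish the two scalar identities $\angl{\bSigma, Q_K} = \tr(P_K \wSigma)$ and $\bmu^{\top} Q_K \bmu = \tr(P_K\nu\nu^{\top}) + 2\bmu^{\top} A_K^{\top} P_K \nu$, derive the second from $\bmu^{\top}(P_K - A_K^{\top} P_K A_K)\bmu$ together with $A_K\bmu = \bmu - \nu$, and only then substitute $\nu = B\ell + \wmu$ and collect against the explicit $2\ell^{\top} RK\bmu$ term; the convergence of the Cesàro average is routine given the geometric decay already shown. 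The one hypothesis worth flagging is that $Q \succ 0$ (assumed in the cost) guarantees $Q_K \succ 0$, which is what licenses the use of \Cref{thm:lyapunov_equation}.
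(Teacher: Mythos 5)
Your proposal is correct, and the final answer it reaches is the right one; the core mechanism — converting $\angl{Q_K,\cdot}$ into $\angl{P_K,\cdot}$ via the Lyapunov equation $A_K^{\top}P_K A_K + Q_K = P_K$ and the cyclic property of the trace — is exactly the engine of the paper's proof as well. Where you differ is in how the limiting second moment is obtained and decomposed. The paper works with the \emph{raw} second moment $\varPhi_t = \E{x_t x_t^{\top}}$, derives a recursion for it, telescopes, and passes to the Ces\`{a}ro limit $\bGamma_{\infty}$, which satisfies a single matrix equation $\bGamma_{\infty} - A_K\bGamma_{\infty}A_K^{\top} = \wSigma + \nu\nu^{\top} + \nu(A_K\bmu)^{\top} + (A_K\bmu)\nu^{\top}$; one trace manipulation then yields $\tr(Q_K\bGamma_{\infty}) = \tr(P_K(\wSigma+\nu\nu^{\top})) + 2\nu^{\top}P_K A_K\bmu$ in one shot. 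You instead reuse the pointwise convergence $(\mu_t,\varSigma_t)\to(\bmu,\bSigma)$ already established in \Cref{thm:limiting_distribution}, reduce the time average to $\angl{Q_K,\bSigma+\bmu\bmu^{\top}}$ by Ces\`{a}ro averaging, and then split the trace identity into the two pieces $\angl{Q_K,\bSigma}=\tr(P_K\wSigma)$ and $\bmu^{\top}Q_K\bmu = \nu^{\top}P_K\nu + 2(A_K\bmu)^{\top}P_K\nu$, the latter via $A_K\bmu = \bmu-\nu$. Both identities check out (note $\bGamma_{\infty}=\bSigma+\bmu\bmu^{\top}$, so the two routes are computing the same object), and your version has the advantage of not re-deriving a moment recursion; the paper's version is self-contained in that it never needs the geometric convergence of $\varSigma_t$, only the convergence of the time averages. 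The recombination at the end — substituting $\nu = B\ell+\wmu$ into $2\nu^{\top}P_K A_K\bmu$ and merging the $\ell$-part with $2\ell^{\top}RK\bmu$ — is identical in both arguments, and your flagged concern about the asymmetric split resolves exactly as you anticipated.
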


\begin{proof}
  Denote by $\varPhi_t := \Es{x_t x_t^{\top}}$ the second \textit{raw} moment of $x_t$. Let $\varGamma_T := \sum_{t=0}^{T-1} \varPhi_t$, $S_T := \sum_{t=0}^{T-1} \mu_t$. Note that
  \begin{alignfit}
    \varPhi_t &= \E{(A_K x_{t-1} + B\ell + w_{t-1}) (A_K x_{t-1} + B\ell + w_{t-1})^{\top}} \\
    &= A_K \varPhi_{t-1} A_K^{\top} + (\nu \nu^{\top} + \wSigma) + \nu \mu_{t-1}^{\top} A_K^{\top} + A_K \mu_{t-1} \nu^{\top}.
  \end{alignfit}
  Take the telescoping sum over $t$, from $0$ to $T$, we have
  \begin{equationfit}[0.95\linewidth]
    \varGamma_{T+1} - \varPhi_0 = A_K \varGamma_T A_K^{\top} + T (\nu \nu^{\top} + \wSigma) + \nu S_T^{\top} A_K^{\top} + A_K S_T \nu^{\top}.
  \end{equationfit}
  Divide both sides by $T$ and let $T \to \infty$, we conclude that
  \begin{equationfit}
    \bGamma_{\infty} = A_K \bGamma_{\infty} A_K^{\top} + (\nu \nu^{\top} + \wSigma) + \nu (A_K \bmu)^{\top} + (A_K \bmu) \nu^{\top},
  \end{equationfit}
  where we define $\bGamma_{\infty} := \lim\limits_{T \to \infty} \frac{1}{T} \varGamma_T$ and $\bS_{\infty} := \lim\limits_{T \to \infty} \frac{1}{T} S_T$, and use $\bS_{\infty} = \bmu$. Meanwhile, the LQR value is given by
  \begin{alignfit}[0.95\linewidth]
    V_{\lqr}(K,\ell)
    &= \mathop{\lim\sup}_{T \to \infty} \frac{1}{T} \E{\sum_{t=1}^{T} \prn*{ x_t^{\top} Q_K x_t + 2 \ell^{\top} RK x_t + \ell^{\top} R \ell} } \\
    &= \mathop{\lim\sup}_{T \to \infty} \frac{1}{T} \prn*{ \tr(Q_K \varGamma_T) + 2 \ell^{\top} RK S_T} + \ell^{\top} R \ell \\
    &= \tr(Q_K \bGamma_{\infty}) + 2 \ell^{\top} R K\bmu + \ell^{\top} R \ell.
  \end{alignfit}
  Finally, by linearity and cyclic property of trace, we have
  \begin{alignfit}[0.9\linewidth]
    \tr(Q_K \bGamma_{\infty})
    &= \tr\prn[\big]{(P_K - A_K^{\top} P_K A_K) \bGamma_{\infty})} \\
    &= \tr(P_K \bGamma_{\infty}) - \tr(A_K^{\top} P_K A_K \bGamma_{\infty}) \\
    &= \tr(P_K \bGamma_{\infty}) - \tr(P_K A_K \bGamma_{\infty} A_K^{\top}) \\
    &= \tr\prn[\big]{P_K (\bGamma_{\infty} - A_K \bGamma_{\infty} A_K^{\top})} \\
    &= \tr\prn[\big]{ P_K ( \nu \nu^{\top} + \wSigma + \nu (A_K \bmu)^{\top} + (A_K \bmu) \nu^{\top} ) } \\
    &= \tr\prn[\big]{ P_K (\wSigma + \nu \nu^{\top}) } + 2 \nu^{\top} P_K A_K \bmu.
  \end{alignfit}
  Finally, we plug the above equation along with $\nu = B\ell + \wmu$ back into the expression of $V_{\lqr}$. This completes the proof.
\end{proof}
  \vfill

\end{document}